\numberwithin{equation}{section}
\numberwithin{figure}{section}
\theoremstyle{plain}
\newtheorem{thm}{\protect\theoremname}[section]
\theoremstyle{definition}
\newtheorem{defn}[thm]{\protect\definitionname}
\theoremstyle{definition}
\newtheorem*{defn*}{\protect\definitionname}
\theoremstyle{remark}
\newtheorem{rem}[thm]{\protect\remarkname}
\theoremstyle{plain}
\newtheorem{lem}[thm]{\protect\lemmaname}
\theoremstyle{plain}
\newtheorem{cor}[thm]{\protect\corollaryname}
\DeclareMathAlphabet{\mathcal}{OMS}{cmsy}{m}{n}
\providecommand{\corollaryname}{Corollary}
\providecommand{\theoremname}{Theorem}
\newtheorem{assumption}{Assumption}
\providecommand{\corollaryname}{Corollary}
\providecommand{\definitionname}{Definition}
\providecommand{\lemmaname}{Lemma}
\providecommand{\remarkname}{Remark}
\providecommand{\theoremname}{Theorem}
\begin{document}
\global\long\def\Sgm{\boldsymbol{\Sigma}}%
\global\long\def\W{\boldsymbol{W}}%
\global\long\def\H{\boldsymbol{H}}%
\global\long\def\P{\mathbb{P}}%
\global\long\def\Q{\mathbb{Q}}%
\global\long\def\rmd{\mathbb{\,\mathrm{d}}}%
\global\long\def\rd{\mathbb{\mathrm{d}}}%
\global\long\def\R{\mathbb{\mathbb{R}}}%
\global\long\def\E{\mathcal{\mathbb{E}}}%
\global\long\def\ione{\vmathbb1}%
\global\long\def\sF{\mathcal{F}}%
\global\long\def\intr{\int_{\R^{3}}}%
\global\long\def\CC{\mathcal{C}}%

\title{On distributions of velocity random fields in turbulent flows}
\author{Jiawei Li\thanks{Department of Mathematical Sciences, Carnegie Mellon University, Pittsburgh, PA 15213, USA, Email address: $\mathtt{jiaweil4@andrew.cmu.edu}$}, $\;$Zhongmin Qian\thanks{Mathematical Institute, University of Oxford, Oxford OX2 6GG, and Oxford Suzhou Centre for Advanced Research, University of Oxford. Email address: $\mathtt{qianz@maths.ox.ac.uk}$} $\;$and$\;$ Mingrui Zhou\thanks{Mathematical Institute, University of Oxford, Oxford OX2 6GG. Email
address: $\mathtt{mingrui.zhou@maths.ox.ac.uk}$}}
\maketitle

\begin{abstract}
The purpose of the present paper is to derive a partial differential equation (PDE) for the single-time single-point probability density function (PDF) of the velocity field of a turbulent flow. The PDF PDE is a highly non-linear parabolic-transport equation, which depends on two conditional statistical numerics of important physical significance. The PDF PDE is a general form of the classical Reynolds mean flow equation \cite{Reynolds1894}, and is a precise formulation of the PDF transport equation \cite{Pope2000}. The PDF PDE provides us with a new method for modelling turbulence. An explicit example is constructed, though the example is seemingly artificial, but it demonstrates the PDF method based on the new PDF PDE.

\medskip{}

\emph{Keywords}:  Navier-Stokes equation, PDF method, turbulent flows, velocity field, Monte-Carlo simulation 

\medskip{}

\emph{MSC classifications}: 35D40, 35K65, 76D05, 76D06, 76F05, 76M35 
\end{abstract}

\section{Introduction}

The research on statistical properties of turbulence flows can be traced back to the semi-empirical theories of turbulence in 1920's and 1930's, while the seminal advances in the area include Prandtl \citep{Prandtl1925}, von K\'arm\'an \citep{Karman1930} and Taylor \citep{Taylor1921, Taylor1935}. The goal of statistical fluid mechanics is to provide good descriptions and computational tools for understanding the distributions of the velocity random fields of turbulent fluid flows. Unlike some other unsolved problems in theoretical physics, the equations of motion for fluid dynamics, even for turbulent flows, have been known for over a century. These equations are highly non-linear and non-local partial differential equations, and it is difficult to extract information about the evolution of fluid flows in a deterministic manner. Thus, as a matter of fact, the velocity field of a turbulent flow is better to be considered as a random field arising from either the random initial data or a random external force, or both. To understand the statistics of turbulent flows, it is desired to know, if it is possible, the evolution of some distributional characteristics of fluid flows. The distribution of a random field such as the velocity field is rather complicated and determining the distribution of turbulent flows is a challenging task even when the initial distribution is known. In 1950's  Hopf \citep{EHopf1952} (see also \citep{MoninYaglom1965}) derived a functional differential equation for the law of the velocity random field, but his equation involves functional derivatives.

In the past decades, the probability density function (PDF) method, based on the transport equation, a formal adjoint equation of the Navier-Stokes equation, has been developed into a useful tool for modelling turbulent flows. This method focuses on evaluating the one-point one-time PDF $p(u;x,t)$ of the velocity field $U(x,t)$ or equivalently the centred field $U(x,t)-\overline{U(x,t)}$. The exact transport equation for the PDF, which involves the mean of the pressure term as well as the conditional expectation of the pressure term, has been derived by Pope and can be found in \citep{Pope1985,Pope2000} for details. However, only few features can be extracted from the formal PDF transport equation for the purpose of modelling turbulent flows. Therefore, applications of PDF methods have been based on the generalised Langevin model, where the time-dependent velocity $U(t)$ of a particle at position $X(t)$ is assumed to satisfy a stochastic differential equation (SDE). 

The main contribution in this paper is the derivation of the PDF partial differential equation (PDE) to the velocity random field which is much more explicit than the formal PDF transport equation. This PDE is a generalisation of Reynolds mean flow equations, which can be closed by introducing Reynolds stress tensor field. Having $6$ dimensions in space $(u,x)$ and $1$ dimension in time $t$, our PDF PDE can be regard as a parabolic-transport equation which has a parabolic operator $\frac{1}{2}\partial_{t}+u\cdot\nabla_{x}+\nu\Delta_{x}$ in $x$ and transport operator $\frac{1}{2}\partial_{t}+\nabla_{u}\cdot$ in $u$. However, the PDF PDE for velocity fields is a second order partial differential equation which is in general not parabolic due to the appearance of a mixed derivative term  $\partial_{u^{i}}\partial_{x^{k}}p$. Even this mixed derivative does not appear in the PDF PDE (which is the case for some turbulent flows which will be explained below), the parabolic part in the PDF PDE involves only the variable $x$, and therefore even for this case the PDF PDE is highly degenerate. This feature of the PDF PDE distinguishes itself from the prevalent parabolic PDEs or other types of PDE theories in literature. 

The PDF PDE that we have obtained, relies on two conditional structure functions, which are the conditional average increment 
\[
\rho^{i}(x,y,u,t)=\E\left[U^{i}(y,t)-U^{i}(x,t)\,|\,U(x,t)=u\right]
\]
and the conditional covariance 
\[
\sigma^{ij}(x,y,u,t)=\E\left[U^{i}(y,t)U^{j}(y,t)\;|\;U(x,t)=u\right].
\]
These conditional structure functions describe the interactions of the velocity random field at different positions, hence they are natural to appear in the PDF PDE. The fact that the distribution of velocity random fields is characterised by the conditional first and second moments is an interesting feature reveled  in this paper. These statistical characteristics  are local, which have the capacity of determining the PDF PDE. Moreover, these local statistical characteristics can localise many concepts, such as homogeneity, isotropy and etc, which were introduced firstly by Taylor and Kolmogorov \citep{K41a,K41b,Taylor1935}, allowing us to generalise such concepts to their weak versions.


We outline the main structure of this paper in the following. In section 2, we introduce definitions related to random fields, which are cornerstones of our main results. The evolution equation for the distribution of the velocity random field of turbulence over time will be derived, under the assumption that the random field is regular. The PDE is going to be applied to various types of flows in section 3, including both the viscid and inviscid cases. We also obtain a stochastic representation formula for the solution of the PDF PDE, together with the constraint that ensures the solution is indeed a PDF for all time $t\geq0$ and position $x\in \R^3$. These theoretical results are important when we apply PDF PDE for modelling turbulent flows. Section 4 is thus devoted to an example of modelling the PDF of turbulence, which has the ability of demonstrating the change of distribution at a fixed position $x$ over time. Our paper will be closed by a few remarks in the last section.

\textit{Conventions on notations}. The following set of conventions is employed throughout the paper. Firstly Einstein's convention on summation on
repeated indices through their ranges is assumed, unless otherwise specified. If $A$ is a vector or a vector field (usually in the space of dimension three) dependent on some parameters, then its components are labelled with upper-script indices so that $A \coloneqq \left(A^{i}\right) =\left(A^{1},A^{2},A^{3}\right)$. The same convention applies to coordinates too. The derivative operators $\nabla$ and $\Delta$ are labelled with subscripts to indicate the variable to which the operator is applied, such as $\Delta_{x}\coloneqq\partial_{x^{i}}\partial_{x^{i}}$ and $\nabla_{x}\cdot A\coloneqq \partial_{x^{i}}A^{i}$. Finally, the velocity vector field will be denoted by $U=\left(U^{i}\right)$,
unless we specified.

\section{PDF equation of velocity fields}

In this section, we aim to introduce some fundamental concepts on random fields and to derive the evolution equation for the random velocity field $\{U(x,t)\}_{x,t}$, where $x\in\R^{3},$ $t\geq0$ and $U(x,t)$ takes values in $\R^{3}$.

\subsection{Random fields and their statistical characteristics}

Given a random field $\left\{ U(x,t)\right\} _{x,t}$ on a probability space $(\Omega,\sF,\P)$, $U(x,t)$ is, by definition, an $\mathbb{R}^{3}$-valued random variable for every $x\in\mathbb{R}^{3}$ and $t\geq0$. The law or the distribution of $U(x,t)$ for fixed $x$ and $t$ is a probability measure on the Borel $\sigma$-algebra of $\mathbb{R}^{3}$. The distribution of the random field $U$ consists of, by definition, all possible finite-dimensional marginal joint distributions of 
\[
U(x_{1},t_{1}),\ldots,U(x_{n},t_{n})
\]
where $x_{i}\in\mathbb{R}^{3}$, $t_{i}\geq0$ and any positive integer $n$. For example, by saying that the random field $\{U(x,t):x\in\mathbb{R}^{3},t\geq0\}$ is Gaussian, we refer to the fact that any finite-dimensional marginal joint distribution is a Gaussian distribution, which in particular implies that the marginal distribution of $U(x,t)$ for any $(x,t)$ has a normal distribution. We remark that the converse argument is not true in general.

The most important statistical numerics for understanding a random field is the correlation function of two random variables, which plays the dominant role in the study of turbulence \citep{Batchelor1953,MoninYaglom1965}. In this paper, we however emphasize the use of a few statistical characteristics based on the conditional distribution. Let us introduce these statistical numerics, which we believe are of the most importance.
\begin{defn}
Given a time-dependent random field $\left\{ U(x,t)\right\} _{x,t}$ on a probability space $(\Omega,\sF,\P)$, for $x,y,u\in\mathbb{R}^{3}$ and $t\geq0$, and $i\in \{1,2,3\}$,
\begin{enumerate}[label=\arabic*)]
    \item the \textit{conditional average increment function} $\rho^{i}$ is defined as 
    \begin{equation}\label{eq: con_mean_diff}
        \rho^{i}(x,y,u,t)=\E\left[U^{i}(y,t)-U^{i}(x,t)\,|\,U(x,t)=u\right],
    \end{equation}
    \item the \textit{conditional covariance function} $\sigma(x,y,u,t)$ is defined to be the covariance of $U(y,t)-U(x,t)$ given $U(x,t)=u$,
    \begin{equation}\label{eq: con_covar}
        \sigma^{ij}(x,y,u,t)=\textrm{cov}\left[\left(U^{i}(y,t),U^{j}(y,t)\right)\,|\,U(x,t)=u\right].
    \end{equation}
\end{enumerate}
\end{defn}

From the definition, it is clear that for every $i$, the \textit{conditional mean function} is of the form
\begin{equation}
b^{i}(x,y,u,t)\coloneqq\E\left[U^{i}(y,t)\,|\,U(x,t)=u\right]=\rho^{i}(x,y,u,t)+u^{i}\label{eq: con_mean}
\end{equation}
and $\rho(x,x,u,t)=0$ for all $u$, $x$ and $t$. The conditional covariance function $\sigma^{ij}(x,y,u,t)$ can be treated as the conditional Reynold stress. These statistical characteristics have explicit representations in terms of the two-point joint distribution. For our purpose, it is convenient to assume that the distribution of $U(x,t)$ for every $(x,t)$ has a probability density function (PDF) with respect to Lebesgue measure on $\R^{3}$, denoted by $p(u;x,t)$, in the sense that 
\[
\E\left[\ione_{\left\{ U(x,t)\in E\right\} }\right]=\int_{E}p(u;x,t)\rmd u\quad\textrm{ for any Borel set }E\in\mathscr{B}({\R^{3}}).
\]

Similarly, the joint distribution of $U(x,t)$ and $U(y,t)$ at two distinct points $x\neq y$ has a joint PDF, denoted by $p_{2}(u_{1},u_{2};x,y,t)$. It follows that the conditional law of $U(y,t)$ given that $U(x,t)=u$ possesses the following conditional PDF 
\[
p_{2|1}(v;u,x,y,t)\coloneqq\P(U(y,t)=v\,|\:U(x,t)=u)=\frac{p_{2}(u,v;x,y,t)}{p(u;x,t)}
\]
with $p_{2|1}(v;u,x,y,t)=0$ if $p(u;x,t)=0$. In terms of the conditional law, the joint PDF of $U(x,t)$ and $U(y,t)$ may be split into a product 
\begin{align*}
p_{2}(u_{1},u_{2};x,y,t) & =p(u_{1};x,t)p_{2|1}(u_{2};u_{1},x,y,t)\\
 & =p(u_{2};y,t)p_{2|1}(u_{1};u_{2},y,x,t).
\end{align*}
As a result, we are allowed to represent the conditional average difference function (\ref{eq: con_mean_diff}) and covariance function (\ref{eq: con_covar}) as an integral relevant to the conditional density, namely
\begin{equation}
\rho^{i}(x,y,u,t)=\intr(v^{i}-u^{i})p_{2|1}(v;u,x,y,t)\rmd v\label{eq: con_mean_diff_integral}
\end{equation}
and 
\[
\sigma^{ij}(x,y,u,t)=\intr\left(v^{i}-b^{i}\right)\left(v^{j}-b^{j}\right)p_{2|1}(v;u,x,y,t)\rmd v.
\]

The use of the conditioning techniques is in fact the main reason for advocating the foundation of statistical fluid mechanics based on the probability theory, rather than on an average procedure, which was first explicitly proposed by Kolmogorov \citep{K41a,K41b}. The homogeneity and the isotropy can be defined in general for random fields indexed by a space variable $x\in\mathbb{R}^{3}$, which have been introduced into the study of turbulence by G. I. Taylor. The local homogeneous and local isotropic flows were introduced by Kolmogorov for formulating K41 theory (and its improved version K61 theory). According to Kolmogorov \citep{K41a,K41b}, a random field $\left\{ U(x,t)\right\} _{x,t}$ is locally homogeneous if for any $x,y\in \R^3$, the conditional distribution of $U(y,t)-U(x,s)$ given $U(x,s)=u$ depends on $y-x$ and $u$, and further it is locally isotropic if the conditional distribution depends only on $|y-x|$ and $u$. By using the conditional average and the conditional covariance functions, it is possible to generalise these terminologies to their weak versions. We are now in a position to state technical assumptions on the random field.
\begin{defn}
The random field $\{U(x,t)\}_{x,t}$ on the probability space $(\Omega,\sF,\P)$ is 
\begin{enumerate}[label=\arabic*)]
    \item \textit{regular} if the conditional average increment function $\rho$ has derivatives up to second order and $\rho(x,y,u,t)$ has a Taylor expansion (for every $x,u$ and $t$ fixed) about $y$: 
    \begin{equation}
        \rho(x,y,u,t)=B_{k}(x,u,t)(y^{k}-x^{k})+\frac{1}{2}A_{jk}(x,u,t)(y^{j}-x^{j})(y^{k}-x^{k})+o\left(|y-x|^{2}\right)\label{eq: taylor_expansion}
    \end{equation}
    as $|y-x|\rightarrow0$, where 
    \begin{equation}
        B_{k}(x,u,t)=\left.\frac{\partial}{\partial y^{k}}\rho(x,y,u,t)\right|_{y=x}\textrm{ and }A_{jk}(x,u,t)=\left.\frac{\partial^{2}}{\partial y^{j}\partial y^{k}}\rho(x,y,u,t)\right|_{y=x},\label{eq: BA_vectors}
    \end{equation}
    and $B$ and $A$ are differentiable in $x^{i},u^{i},t$ for all $i\in\{1,2,3\}$;
    \item \textit{weakly homogeneous} if given $u\in\R^{3},t\geq0$, $\rho^{i}(x,y,u,t)=o(|y-x|)$ as $|y-x|\rightarrow0$ for all $i,y,x$;
    \item \textit{weakly isotropic} if both $\rho(x,y,u,t)$ and $\sigma(x,y,u,t)$ depend only on $|y-x|$, $u$ and $t$, and $A_{kk}^{i}$ only depends on $u$ and $t$.
\end{enumerate}
\end{defn}

The functions $B_{k}^{i}$ and $A_{jk}^{i}$ in the Taylor expansion of the conditional average increment $\rho^{i}(x,y,u,t)$ also have the form
\[
\begin{cases}
B_{k}^{i}(x,u,t)\,=\E\left[\frac{\partial}{\partial x^{k}}U^{i}(x,t)\,\bigg|\,U(x,t)=u\right],\\
A_{jk}^{i}(x,u,t)=\E\left[\frac{\partial^{2}}{\partial x^{j}\partial x^{k}}U^{i}(x,t)\,\bigg|\,U(x,t)=u\right]
\end{cases}
\]
for $i,j,k\in\{1,2,3\}$. Moreover, if the $\{U(x,t)\}_{x,t}$ is weakly homogeneous, we have the equivalent characterisation
\[
B_{k}^{i}(x,u,t)=\lim_{\varepsilon\rightarrow0}\frac{\rho^{i}(x,x+\varepsilon e^{(k)},u,t)}{\varepsilon}=0
\]
for all $i,k\in\{1,2,3\}$, where $e^{(k)}$ is the unit vector at the
$k$-th direction. 

Unlike Kolmogorov's definition of isotropic flows and homogeneous flows, our concept on weakly isotropy has no direct relationship to weakly homogeneity. Nevertheless, a regular locally homogeneous turbulent flow in the sense of Kolmogorov satisfies the condition that $A_{kk}^{i}$ depends only on $u$, since the conditional average increment of such a flow must obey $\rho^{i}(x,y,u,t)=g^{i}(y-x,u)$ for some function $g^{i}$ and
\begin{align*}
A_{jk}^{i}(x,u,t) & =\frac{\partial^2 g^{i}}{\partial y^{j}\partial y^{k}}(0,u).
\end{align*}
Moreover, if we further assume the flow is locally isotropic, the conditional average increment function satisfies $\rho^{i}(x,y,u,t)=g^{i}(|y-x|,u)$ and 
\begin{align*}
B_{k}^{i}(x,u,t)=\lim_{\varepsilon\rightarrow0^{+}}\frac{\rho^{i}(x,x+\varepsilon e^{(k)},u,t)}{\varepsilon} & =\lim_{\varepsilon\rightarrow0^{+}}\frac{-\rho^{i}(x,x-\varepsilon e^{(k)},u,t)}{\varepsilon}
\end{align*}
is well-defined if and only if $B_{k}^{i}=0$. Therefore, this turbulent flow is both weakly homogeneous and weakly isotropic in our sense.

Apart from extending Kolmogorov's definitions on homogeneity and isotropy, the significance of introducing these terminologies is they will eliminate the mixed-terms in the PDF PDE, which will be thoroughly explained in section \ref{sec: Application_to_turbulent_flows}.

\subsection{The evolution equation for the velocity distribution}

In this subsection, we derive the main theoretical result, which provides the theoretical foundation of modelling PDFs of turbulent flows based on two statistical characteristics. We consider an incompressible turbulent flow, inviscid or viscous, with kinetic viscosity constant $\nu$ which is positive for viscous fluid, or $\nu$ reads as zero for inviscid fluid. The turbulent flow is described by its velocity $U=(U^{1},U^{2},U^{3})$ and the pressure $P$, which are random fields and the three dimensional Navier-Stokes equations 
\begin{align}
\frac{\partial U^{i}}{\partial t}+U^{j}\frac{\partial U^{i}}{\partial x^{j}} & =\nu\Delta_{x}U^{i}-\frac{\partial P}{\partial x^{i}},\label{eq:ns-m1}\\
U(x,0) & =U_{0}(x),\nonumber 
\end{align}
where $i=1,2,3$ and $\nu\geq0$ is the viscosity constant, together
with the constraint
\begin{equation}
\frac{\partial U^{j}}{\partial x^{j}}=0.\label{eq:ns-m2}
\end{equation}
The initial condition is also treated as a random field on $\R^{3}$ and each sample path $\omega\in\Omega$ corresponds to a deterministic function $U(x,t;\omega)$, which serves as a solution to equation (\ref{eq:ns-m1}) with initial data $U_{0}(x;\omega)$. We will discuss an ideal case, for the purpose of understanding the local properties of turbulent flows, where the region occupied by the fluid is the entire space $\mathbb{R}^{3}$. Moreover, without further qualifications, the dynamical variables such as $U(x,t)$ and $P(x,t)$ decay to zero sufficiently fast as $|x|$ tends to infinity. In addition, to avoid technical difficulties, but not in any way implying that these issues are not important, we will assume that the dynamical variables $U(x,t)$ and $P(x,t)$ are sufficiently smooth functions of $(x,t)$. 

Due to the divergence-free condition (\ref{eq:ns-m2}) , the pressure term satisfies the following Poisson equation
\[
\Delta_{x}P=-\frac{\partial^{2}(U^{i}U^{j})}{\partial x^{j}\partial x^{i}}.
\]
Therefore, according to the Green formula, we have the integral representation
\begin{equation}
P(x,t)=\intr\frac{1}{4\pi|y-x|}\frac{\partial^{2}(U^{i}U^{j})}{\partial y^{j}\partial y^{i}}\rmd y,\label{eq: pressue}
\end{equation}
which implies in particular that the distribution of $P$ is completely determined by the distribution of the velocity random field.

We assume that $\{U(x,t)\}_{x,t}$ is a regular random field. Since $U(x,t)$ is divergence-free as in equation (\ref{eq:ns-m2}), we have for all $i,k$,
\[
\frac{\partial}{\partial y^{i}}\rho^{i}(x,y,u,t)=0,\quad B_{i}^{i}=0,\quad A_{ik}^{i}=A_{ki}^{i}=0,
\]
as well as the following integral condition for PDF of $U(x,t)$ 
\begin{equation}
\frac{\partial}{\partial x^{i}}\intr u^{i}p(u;x,t)\rmd u=0,\label{eq:div-PDF}
\end{equation}
which will appear as a natural constraint for the PDF PDE we will derive. 

Recall that the Reynolds equation (see \cite{Reynolds1894}) is obtained by taking the average in (\ref{eq:ns-m1}), more explicitly 
\begin{align*}
\frac{\partial\E\left[U^{i}\right]}{\partial t}+\E\left[U^{j}\frac{\partial U^{i}}{\partial x^{j}}\right] & =\nu\Delta_x\E\left[U^{i}\right]-\frac{\partial\E\left[P\right]}{\partial x^{i}},\\
\frac{\partial\E\left[U^{i}\right]}{\partial x^{i}} & =0.
\end{align*}
The conventional treatment for the non-linear term on the left-hand side is to write 
\[
\E\left[U^{j}\frac{\partial U^{i}}{\partial x^{j}}\right]=\E\left[U^{j}\right]\frac{\partial\E\left[U^{i}\right]}{\partial x^{j}}+\frac{\partial}{\partial x^{i}}r^{ij},
\]
where 
\[
r^{ij}=\E\left[(U^{i}-\E[U^{i}])(U^{j}-\E[U^{j}])\right]
\]
is the Reynolds stress. The PDF equation can be obtained by carrying out this computation for the average $\E\left[F(U)\right]$ where $F:\R^{3}\to\R$ is set as a smooth function with a compact support, instead of choosing $F(x)=x$ for each velocity component in the case of Reynold stress. We are now in a position to establish the most important work in this
paper:
\begin{thm}
\label{thm: PDF_PDE}Let $\left\{ U(x,t)\right\} _{(x,t)\in\R^{3}\times\R_{+}}$ be a regular random field defined on the probability space $(\Omega,\sF,\P)$ and take values in $\R^{3}.$ Suppose $\left\{ U(x,t)\right\} _{x,t}$ satisfies the Navier-Stokes equation (\ref{eq:ns-m1}) and the continuity equation (\ref{eq:ns-m2}), then the PDF $p(u;x,t)$ of the velocity $U(x,t)$ satisfies the evolution equation: 
\begin{align}
\begin{split}\label{eq: PDF_PDE_origin}
    \frac{\partial p}{\partial t}+u^{i}\frac{\partial p}{\partial x^{i}} & =\nu\Delta_{x}p+\frac{\partial}{\partial u^{i}}\left(\nu\frac{\partial\left(pB_{k}^{i}\right)}{\partial x^{k}}-\nu pA^{i}+pQ^{i}\right),\\
p(u;x,0) & =p_{0}(u;x),
\end{split}
\end{align}
where $p_{0}(u;x)$ is the PDF of the $U_{0}(x)$ in the random field
$\left\{ U_{0}(x)\right\} _{x}$ and 
\begin{equation}
Q^{i}(x,u,t)=\intr\frac{y^{i}-x^{i}}{4\pi|y-x|^{3}}\frac{\partial^{2}\left(\sigma^{jk}+b^{j}b^{k}\right)}{\partial y^{k}\partial y^{j}}\rmd y,\label{eq:q-term}
\end{equation}
\begin{equation}
B_{k}^{i}=\left.\frac{\partial\rho^{i}(x,y,u,t)}{\partial y^{k}}\right|_{y=x},\textrm{ and }A^{i}(x,u,t)=\left.\Delta_{y}\rho^{i}(x,y,u,t)\right|_{y=x}=A_{kk}^{i}(x,u,t).\label{eq: BA_in_PDE}
\end{equation}
\end{thm}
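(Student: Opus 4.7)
The plan is to test (\ref{eq: PDF_PDE_origin}) against an arbitrary smooth compactly supported $F:\R^{3}\to\R$ and to recover the equation from the identity $\partial_{t}\E[F(U(x,t))]=\E[\partial_{i}F(U)\partial_{t}U^{i}]$. Writing $\E[F(U(x,t))]=\int F(u)p(u;x,t)\,du$ and substituting $\partial_{t}U^{i}$ from (\ref{eq:ns-m1}) splits the right-hand side into a convection, a viscous and a pressure contribution. I would process each separately, rewrite each as an integral against $F$ of an expression in $p$ and its derivatives, and then invoke the arbitrariness of test functions.

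For the convection term, the divergence-free condition (\ref{eq:ns-m2}) gives $U^{j}\partial_{x^{j}}F(U)=\partial_{x^{j}}(U^{j}F(U))$, so commuting expectation with $\partial_{x^{j}}$ produces $-\int u^{j}F(u)\partial_{x^{j}}p\,du$, which is the advective term on the left-hand side of (\ref{eq: PDF_PDE_origin}). For the viscous term, I would use the second-order chain rule $\Delta_{x}F(U)=\partial_{i}F(U)\Delta_{x}U^{i}+\partial_{i}\partial_{j}F(U)\partial_{x^{k}}U^{i}\partial_{x^{k}}U^{j}$ to peel off a clean $\nu\int F\Delta_{x}p\,du$ piece, leaving the conditional quadratic $S_{k}^{ij}(x,u,t):=\E[\partial_{x^{k}}U^{i}\partial_{x^{k}}U^{j}\mid U(x,t)=u]$ as a remainder which, after two integrations by parts in $u$, becomes $-\nu\int F\,\partial_{u^{i}}\partial_{u^{j}}(S_{k}^{ij}p)\,du$. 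The crucial identity
\[
\partial_{x^{k}}p(u;x,t) = -\partial_{u^{i}}\!\left(p\,B_{k}^{i}\right),
\]
obtained by testing $\partial_{x^{k}}\E[G(U(x,t))]$ against a smooth $G$, yields after one more differentiation $\partial_{x^{k}}(pB_{k}^{i})=pA^{i}-\partial_{u^{l}}(S_{k}^{li}p)$, rewriting the remainder as exactly $\partial_{u^{i}}(\nu\partial_{x^{k}}(pB_{k}^{i})-\nu pA^{i})$, matching the form in the theorem.

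For the pressure term, I would differentiate the Green-function representation (\ref{eq: pressue}) to obtain
\[
\partial_{x^{i}}P(x,t)=\int_{\R^{3}}\frac{y^{i}-x^{i}}{4\pi|y-x|^{3}}\,\partial_{y^{k}}\partial_{y^{j}}(U^{k}U^{j})(y,t)\,dy,
\]
condition on $U(x,t)=u$, and invoke $\E[U^{k}(y,t)U^{j}(y,t)\mid U(x,t)=u]=\sigma^{kj}(x,y,u,t)+b^{k}b^{j}$. The non-local double integral then collapses to $-\int\partial_{i}F(u)Q^{i}p\,du=\int F\partial_{u^{i}}(pQ^{i})\,du$ with $Q^{i}$ as in (\ref{eq:q-term}). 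Collecting the three contributions and appealing to arbitrariness of $F$ gives (\ref{eq: PDF_PDE_origin}); the initial condition follows immediately from the definition of $p_{0}$.

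I expect the main obstacle to be justifying the formal manipulations rigorously: commuting $\partial_{t},\partial_{x}$ with the expectation (which uses the regularity and decay of the random field), swapping integration against the singular kernel $\tfrac{y^{i}-x^{i}}{4\pi|y-x|^{3}}$ with both $\partial_{x^{i}}$ and expectation in the pressure integral (which uses the decay at infinity), and justifying the $u$-integrations by parts against conditional moments (cleanest via the compactly supported $F$). The most delicate conceptual step, however, is the appearance of $B_{k}^{i}$: the Navier--Stokes system contains only $\Delta_{x}U^{i}$, so the first-order conditional quantity $B_{k}^{i}$ enters only as an artefact of rewriting $\partial_{x^{k}}p$ via the identity above, and it is precisely this rewriting that eliminates the otherwise intractable second-order conditional moment $S_{k}^{ij}$ in favour of the quantities $B_{k}^{i}$, $A^{i}$ and $Q^{i}$ featuring in the statement.
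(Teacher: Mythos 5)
Your proposal is correct, and its overall skeleton (test against a compactly supported $F$, substitute the Navier--Stokes equation, split into convection, viscous and pressure contributions, integrate by parts in $u$) coincides with the paper's; the convection and pressure terms are handled in substance exactly as you do, the paper merely routing the conditioning step through finite differences and two-point PDFs. The genuine difference is your treatment of the viscous remainder $J_{1}=-\nu\,\mathbb{E}\bigl[\partial_{u^{i}}\partial_{u^{j}}F(U)\,\partial_{x^{k}}U^{i}\,\partial_{x^{k}}U^{j}\bigr]$. The paper never introduces your conditional gradient covariance $S_{k}^{ij}$: it writes the product of gradients as the finite-difference limit
\[
\lim_{h\to0}\frac{1}{h^{2}}\bigl(F_{i}(U(x+he^{(k)},t))-F_{i}(U(x,t))\bigr)\bigl(U^{i}(x+he^{(k)},t)-U^{i}(x,t)\bigr),
\]
expresses the resulting expectation through the two-point PDF and the conditional increment $\rho^{i}$, and then invokes the Taylor expansion in the definition of a regular random field to compute the limit as $-\partial_{x^{k}}(pB_{k}^{i})+pA^{i}$ directly. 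You instead keep $S_{k}^{ij}$ and eliminate it via the identity $\partial_{x^{k}}p=-\partial_{u^{i}}(pB_{k}^{i})$ and its differentiated form $\partial_{x^{k}}(pB_{k}^{i})=pA^{i}-\partial_{u^{l}}(S_{k}^{li}p)$; both identities are correct (test against a smooth $G$ and use the tower property), the cancellation of $S_{k}^{ij}$ is exact, and your computation lands on the same right-hand side. What the paper's route buys: it uses only the Taylor expansion of the first conditional moment $\rho$ --- literally the content of the regularity hypothesis --- and never requires $S_{k}^{ij}$ to exist or to be differentiable in $u$. What your route buys: it is structurally cleaner, avoids the finite-difference bookkeeping, and makes transparent why the closed equation involves only $B$, $A$, $Q$ rather than any conditional second moment of gradients. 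The cost is two tacit assumptions beyond the stated hypothesis: the existence (and sufficient $u$-regularity) of $S_{k}^{ij}$, and the identifications $B_{k}^{i}=\mathbb{E}[\partial_{x^{k}}U^{i}\,|\,U(x,t)=u]$ and $A^{i}=\mathbb{E}[\Delta_{x}U^{i}\,|\,U(x,t)=u]$, which the paper asserts immediately after defining regularity but does not prove. Beyond that, both arguments are formal to the same degree concerning interchanges of limits, derivatives and expectations, which you correctly flag as the main technical obstacle.
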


\begin{proof}
Let $F\in D(\R^{3})$ be a test function, which is a smooth function taking values in $\R$ with a compact support. For simplicity, we denote $F_{i}(u)\coloneqq\partial_{u^{i}}F(u)\in D(\R^{3})$. Applying $\partial_{t}$ on the average $\E\left[F(u)\right]$ followed by exchanging integral and their derivative operator, we have
\[
\frac{d}{dt}\mathbb{E}\left[F(U(x,t))\right]=\int F(u)\frac{\partial}{\partial t}p(u;x,t)\rmd u.
\]
On the other hand 
\begin{equation}
\frac{d}{dt}\mathbb{E}\left[F(U(x,t))\right]=\mathbb{E}\left[\frac{\partial}{\partial t}F(U(x,t))\right]\label{eq:exp-tu1}
\end{equation}
together with utilizing the Navier-Stokes equations (\ref{eq:ns-m1}), we get
\[
\left(\frac{\partial}{\partial t}-\nu\Delta_{x}\right)F(U)=-\frac{\partial\left(U^{i}F(U)\right)}{\partial x^{i}}-\nu\frac{\partial F_{i}(U)}{\partial x^{k}}\frac{\partial U^{i}}{\partial x^{k}}-F_{i}(U)\frac{\partial P}{\partial x^{i}}.
\]
Substituting this into (\ref{eq:exp-tu1}), we obtain that 
\[
\mathbb{E}\left[\frac{\partial}{\partial t}F(U)\right]=\nu\Delta_{x}\mathbb{E}\left[F(U)\right]-\frac{\partial}{\partial x^{i}}\mathbb{E}\left[U^{i}F(U)\right]+J_{1}+J_{2},
\]
where the first two terms are equivalent to
\[
\nu\Delta_{x}\mathbb{E}\left[F(U)\right]=\intr F(u)\nu\Delta_{x}p(u;x,t)\rmd u,
\]
and
\[
-\frac{\partial}{\partial x^{i}}\mathbb{E}\left[U^{i}F(U)\right]=\intr F(u)\left(-u^{i}\frac{\partial}{\partial x^{i}}p(u;x,t)\right)\rmd u.
\]
Subsequently, the remaining $J_{1}$, $J_{2}$ are of the form
\[
J_{1}=-\nu\mathbb{E}\left[\frac{\partial F_{i}(U)}{\partial x^{k}}\frac{\partial U^{i}}{\partial x^{k}}\right],\quad J_{2}=-\mathbb{E}\left[F_{i}(U)\frac{\partial P}{\partial x^{i}}\right].
\]

The evaluation of $J_{1}$ and $J_{2}$ requires invoking the joint distribution at two points together with taking limits. Here we depart from this approach by expressing this term via the PDF, which allows us to perform similar computations for a general case. The partial derivative $\frac{\partial U^{i}}{\partial x^{j}}$ may be written as a limit
\[
U^{j}(x,t)\frac{\partial}{\partial x^{j}}U^{i}(x,t)=\lim_{h\rightarrow0}\frac{1}{h}U^{j}(x,t)\left(U^{i}(x+he^{(j)},t)-U^{i}(x,t)\right).
\]
Assuming that we are able to take the average under the limit i.e. the dominated convergence theorem can be applied, we are able to write the non-linear term in terms of 
\[
\E\left[U^{j}\frac{\partial U^{i}}{\partial x^{j}}\right]=\lim_{h\rightarrow0}\frac{1}{h}\E\left[U^{j}(x,t)\left(U^{i}(x+he^{(j)},t)-U^{i}(x,t)\right)\right]=:\lim_{h\rightarrow0}L^{i}(h).
\]
The average appearing on the right-hand side, denoted by $L^{i}(h)$, may be evaluated in terms of the two-point joint distribution 
\begin{align*}
L^{i}(h) & =\intr\intr u^{j}u_{1}^{i}p_{2}(u,u_{1};x,x+he^{(j)},t)\rmd u_{1}\rmd u-\int u^{j}u^{i}p(u;x,t)\rmd u\\
 & =\int u^{j}p(u;x,t)\left[\intr u_{1}^{i}p_{2|1}(u_{1};u,x,x+he^{(j)},t)\rmd u_{1}\right]\rmd u\\
 & \quad -\intr u^{j}u^{i}p(u;x,t)\rmd u\\
 & =\intr u^{j}p(u;x,t)\left[\intr(u_{1}^{i}-u^{i})p_{2|1}(u_{1};u,x,x+he^{(j)},t)\rmd u_{1}\right]\rmd u\\
 & =\intr u^{j}p(u;x,t)\rho^{i}(x,x+he^{(j)},u,t)\rmd u,
\end{align*}
which follows that 
\[
\frac{\partial}{\partial x^{j}}\E\left[U^{j}U^{i}\right]=\intr u^{j}B_{j}^{i}(x,u,t)p(u;x,t)\rmd u.
\]
Now we deal with $J_{1}$. Writing the space derivatives $\frac{\partial F_{i}(U)}{\partial x^{k}}\frac{\partial U^{i}}{\partial x^{k}}$
as the following limits 
\[
\lim_{h\rightarrow0}\frac{1}{h^{2}}\left(F_{i}(U(x+he^{(k)},t))-F_{i}(U(x,t))\right)\left(U^{i}(x+he^{(k)},t)-U^{i}(x,t)\right),
\]
where $e^{(1)}=(1,0,0)$ and so on. Taking expectation first, we obtain
that 
\begin{align*}
J_{1} & =-\nu\lim_{h\rightarrow0}\frac{1}{h^{2}}\intr\left(F_{i}(u_{2})-F_{i}(u_{1})\right)\left(u_{2}^{i}-u_{1}^{i}\right)p_{2}(u_{1},u_{2};x,x+he^{(k)},t)\rmd u_{1}\rmd u_{2}\\
 & =\nu\intr F_{i}(u)\lim_{h\rightarrow0}\frac{I^{i,k}(h)}{h^{2}}\rmd u,
\end{align*}
where 
\[
I^{i,k}(h)\coloneqq\int_{\R^{3}}(u_{1}^{i}-u^{i})p_{2}(u_{1};u,x,x+he^{(k)},t)\mathrm{d}u_{1}+\int_{\R^{3}}\int_{\R^{3}}(u_{1}^{i}-u^{i})p_{2}(u;u_{1},x,x+he^{(k)},t)\mathrm{d}u_{1}.
\]
Using the conditional probability notation that we introduced, we
may write this as 
\begin{align*}
I^{i,k}(h)\coloneqq & p(u;x+he^{(k)},t)\intr(u_{1}^{i}-u^{i})p_{2|1}(u_{1};u,x+he^{(k)},x,t)\rmd u_{1}\\
 & +p(u;x,t)\intr(u_{1}^{i}-u^{i})p_{2|1}(u_{1};u,x,x+he^{(k)},t)\rmd u_{1}\\
= & p(u;x+he^{(k)},t)\left(\rho^{i}(x+he^{(k)},x,u,t)+\rho^{i}(x,x+he^{(k)},u,t)\right)\\
 & -\left(p(u;x+he^{(k)},t)-p(u;x,t)\right)\rho^{i}(x,x+he^{(k)},u,t).
\end{align*}
The last equality is a result of applying (\ref{eq: con_mean_diff_integral}), which converts integrals into conditional average increments $\rho^{i}$. As a consequence of the regularity condition on the random field, we make use of (\ref{eq: taylor_expansion}) to deduce
\begin{align*}
\rho^{i}(x+he^{(k)},x,u,t)&+\rho^{i}(x,x+he^{(k)},u,t)  =\left\{ B_{k}^{i}(x,u,t)h+\frac{1}{2}A_{kk}^{i}(x,u,t)h^{2}+o(h^{2})\right\} \\
&+\left\{ -B_{k}^{i}(x+he^{(k)},u,t)h+\frac{1}{2}A_{kk}^{i}(x+he^{(k)},u,t)h^{2}+o(h^{2})\right\} \\
 & \qquad\qquad\qquad\qquad\;\,=\left(B_{k}^{i}(x,u,t)-B_{k}^{i}(x+he^{(k)},u,t)\right)h\\
 & \qquad\qquad\qquad\qquad\;\,\quad+\frac{1}{2}\left(A_{kk}^{i}(x,u,t)+A_{kk}^{i}(x+he^{(k)},u,t)\right)h^{2}+o(h^{2})
\end{align*}
and therefore
\begin{align*}
\lim_{h\rightarrow0}\frac{1}{h^{2}}I^{i,k}(h) & =-p\frac{\partial}{\partial x^{k}}B_{k}^{i}+pA^{i}-\frac{\partial p}{\partial x^{k}}B_{k}^{i}\\
 & =-\frac{\partial}{\partial x^{k}}(pB_{k}^{i})+pA^{i}.
\end{align*}
We perform integration by parts to derive
\[
J_{1}=\intr F(u)\left[-\frac{\partial}{\partial u^{i}}\left(\nu\left[-\frac{\partial}{\partial x^{k}}(pB_{k}^{i})+pA^{i}\right]\right)\right]\rmd u.
\]
Next we handle $J_{2}$. Applying the representation (\ref{eq: pressue}),
we arrive at 
\[
\frac{\partial P}{\partial x^{i}}=\intr\frac{y^{i}-x^{i}}{4\pi|y-x|^{3}}\frac{\partial^{2}(U^{j}U^{k})}{\partial y^{k}\partial y^{j}}\rmd y,
\]
which implies 
\[
J_{2}=-\mathbb{E}\left[F_{i}(U)\int_{\R^{3}}\frac{y^{i}-x^{i}}{4\pi|y-x|^{3}}\frac{\partial^{2}(U^{j}U^{k})}{\partial y^{k}\partial y^{j}}\rmd y\right].
\]
Writing the derivative in terms of 
\begin{align*}
\frac{\partial^{2}(U^{j}U^{k})}{\partial y^{k}\partial y^{j}} & =\lim_{h\rightarrow0}\frac{1}{h^{2}}\bigg\{ U^{j}(y+h(e^{(k)}+e^{(j)}),t)U^{k}(y+h(e^{(k)}+e^{(j)}),t)\\
 & \qquad\qquad\;-U^{j}(y+he^{(k)},t)U^{k}(y+he^{(k)},t)-U^{j}(y+he^{(j)},t)U^{k}(y+he^{(j)},t)\\
 & \qquad\qquad\;+U^{j}(y,t)U^{k}(y,t)\bigg\}
\end{align*}
and using the two-point PDF by integrating then taking limit as $h\rightarrow0$,
lead us to
\[
J_{2}=-\int_{\R^{3}}F_{i}(u)\left[\int_{\R^{3}}\frac{y^{i}-x^{i}}{4\pi|y-x|^{3}}\frac{\partial^{2}}{\partial y^{k}\partial y^{j}}J_{2}^{jk}\rmd y\right]\rmd u,
\]
where the integral $J_{2}^{jk}$ has the following integral form
\begin{align*}
J_{2}^{jk}: & =\int_{\R^{3}}u_{1}^{j}u_{1}^{k}p_{2}(u,u_{1};x,y,t)\rmd u_{1}\\
 & =p(u;x,t)\intr u_{1}^{j}u_{1}^{k}p_{2|1}(u_{1};u,x,y,t)\rmd u_{1}\\
 & =p(u;x,t)\left(\sigma^{jk}+b^{j}b^{k}\right)(x,y,u,t),
\end{align*}
through using equations (\ref{eq: con_mean_diff}, \ref{eq: con_covar}).
Therefore, substituting this into the equation for $J_{2}$ yields
\begin{align*}
J_{2} & =-\int_{\R^{3}}F_{i}(u)p(u;x,t)\int_{\R^{3}}\frac{y^{i}-x^{i}}{4\pi|y-x|^{3}}\frac{\partial^{2}\left(\sigma^{jk}+b^{j}b^{k}\right)}{\partial y^{k}\partial y^{j}}\rmd y\rmd u\\
 & =\int_{\R^{3}}F(u)\frac{\partial}{\partial u^{i}}\left[p(u;x,t)Q^{i}(x,u,t)\right]\rmd u.
\end{align*}

Putting all terms together, we deduce that
\begin{align*}
\intr F(u) & \left(-\frac{\partial}{\partial u^{i}}\left(\nu\frac{\partial\left(pB_{k}^{i}\right)}{\partial x^{k}}(x,u,t)-\nu p(u;x,t)A^{i}(x,u,t)+p(u;x,t)Q^{i}(x,u,t)\right)\right.\\
 & \quad\left.+\left(\partial_{t}+u\cdot\nabla_{x}-\nu\Delta_{x}\right)p(u;x,t)\right)\rmd u=0
\end{align*}
for all such $F\in D(\R^{3})$. Therefore, we must have (\ref{eq: PDF_PDE_origin}).
\end{proof}
We finish this section by adding several comments. The PDF PDE (\ref{eq: PDF_PDE_origin}) may be written as 
\begin{equation}
\left(\frac{\partial}{\partial t}+u^{i}\frac{\partial}{\partial x^{i}}-\nu\Delta_{x}\right)p=\frac{\partial}{\partial u^{i}}\left(\nu B_{k}^{i}\frac{\partial p}{\partial x^{k}}+pC^{i}\right),\label{eq: PDF_PDE_origin_C}
\end{equation}
where for simplicity we introduce the following vector field 
\begin{equation}
C^{i}=\nu\frac{\partial B_{k}^{i}}{\partial x^{k}}-\nu A^{i}+Q^{i}\label{eq:C-vector}
\end{equation}
for $i=1,2,3$. The equation (\ref{eq: PDF_PDE_origin_C}) is a mixed
type of parabolic and transport PDE. The parabolic operator in variables
$(x,t)$
\[
\frac{\partial}{\partial t}+u^{i}\frac{\partial}{\partial x^{i}}-\nu\Delta_{x}
\]
is independent of fluid flows, which is a significant feature.

Although the PDF PDE (\ref{eq: PDF_PDE_origin_C}) appears to be linear in the PDF $p(u;x,t)$, it is much more complicated than it looks. In particular, the coefficients $A,B$ and $Q$ are functionals of the conditional average and covariance functions, which are in general not determined by the PDF $p(u;x,t)$ alone. Therefore the PDF PDE (\ref{eq: PDF_PDE_origin_C}) is not a closed partial differential equation. The significance of the PDF PDE lies in the fact that if the statistical numerics $\rho$ and $\sigma$ are considered as given, which will be the case for modelling turbulent flows, then the PDF PDE is a partial differential equation of second order, though mixed type of parabolic and transport in general. 

Nevertheless, the PDE (\ref{eq: PDF_PDE_origin}) is a challenging obstacle even if $A,B,Q$ are all considered as given. The function $B$ can be understood as the mean velocity gradient at $(x,t)$ condition on the velocity vector at $(x,t)$, which brings the mixed derivatives $\partial_{u^{i}}\partial_{x^{k}}p$, while the corresponding diffusion matrix $(D_{ij})_{1\leq i,j\leq6}$ collecting the second order terms is of the form
\begin{align*}
D_{ij} & =\begin{cases}
0 & 1\leq i,j\leq3,\\
\nu B_{j-3}^{i} & 1\leq i\leq3,\,4\leq j\leq6,\\
\nu B_{i-3}^{j} & 1\leq j\leq3,\,4\leq i\leq6,\\
\nu\ione_{\left\{ j=k\right\} } & 4\leq i,j\leq6,
\end{cases}
\end{align*}
if we consider $(u,x)$ as a whole. The matrix $D_{ij}$ is not necessarily symmetric and not non-negative definite even if it is symmetric. It poses a challenging mathematical problem developing a theory of this kind of mixed type partial differential equations to facilitate the modelling of turbulent flows based on the PDF PDE. 

\section{Application to turbulent flows\label{sec: Application_to_turbulent_flows}}

As we have seen, our PDE (\ref{eq: PDF_PDE_origin}) does not fit into any existing categories of PDE theories. However, functionals $A,B,Q$ will be derived, if the conditional statistics can be obtained or estimated through practical experiments. Therefore, tracking the PDF of the turbulent flow is equivalent to measuring or modelling the conditional mean and conditional variance, followed by solving the PDF PDE (\ref{eq: PDF_PDE_origin}) using some numerical methods. This brings a new approach on the modelling of turbulent flows. 

In this part, we establish some mathematical tools for the purpose of modelling turbulence based on the PDF PDE.

For convenience, let us introduce the following technical assumptions on a function $f$.
\begin{assumption}\label{assumption: LipschitzCondition}
For a function $f(x,u,t)$ which is uniformly continuous in $t$, there exist constants $K_{1},K_{2}\geq 0$ such that 
\begin{align*}
|f(x,u,t)-f(y,w,t)| & \leq K_{1}(|x-y|+|u-w|),\\
|f(x,u,t)| & \leq K_{2}\left(1+|u|\right),
\end{align*}
for all $x,y,u,w\in\R^{3}$ and $t\geq0$.
\end{assumption}
\begin{assumption}\label{assumption: Polygrowth}
The function $f(u;x)$ is continuous and has up to polynomial growth in $(u,x)$.
\end{assumption}

\subsection{Weakly homogeneous and weakly isotropic flows}

When the viscous incompressible flow is weakly homogeneous, the mix-derivative term disappears and the PDF PDE is simplified to 
\begin{align}
\begin{split}\label{eq: PDF_PDE_homogeneous}
    \left(\frac{\partial}{\partial t}+u^{i}\frac{\partial}{\partial x^{i}}-\nu\Delta_{x}\right)p & =\frac{\partial}{\partial u^{i}}\left(pC^{i}\right),\\
p(u;x,0) & =p_{0}(u;x),
\end{split}
\end{align}
where $C=Q-\nu A$, $A$ and $Q$ are given by equations (\ref{eq:q-term}) and (\ref{eq: BA_in_PDE}) respectively. By the definition of $B$, a weakly homogeneous flow has the property that the velocity gradient condition on the velocity vector on the same location is a centred random vector. The weak homogeneity allows us to state the representation formula, which provides a useful tool when we model weakly homogeneous turbulent flows.
\begin{thm}\label{thm: weakly_homogeneous}
Given the explicit form of $C$, we suppose that $C$ satisfies assumption \normalfont{\textbf{\ref{assumption: LipschitzCondition}}} and $p_{0}(u;x)$ satisfies assumption \normalfont{\textbf{\ref{assumption: Polygrowth}}} respectively. 
\begin{enumerate}[label=\arabic*)]
    \item Assuming $p(u;x,t)$ is the solution to equation (\ref{eq: PDF_PDE_homogeneous}) and also a member of $\CC^{1,2,1}(\R^{3}\times\R^{3}\times[0,T])$ with some fixed $T>0$, the solution $p$ is unique and possesses the following representation form
    \begin{equation}\label{eq: S_rep_h}
        p(u;x,t)=\mathbb{E}\left[p_{0}(Y(t);X(t))q(t)\right],
    \end{equation}
    where for any given $t\in[0,T]$ and $(x,u)$, $(X,Y,q)$ is the unique strong solution to the system of SDEs 
    \[
    \rd X^{i}(s)=-Y^{i}(s)\rmd s+\sqrt{2\nu}\rmd M^{i}(s),\quad X(0)=x,
    \]
    \[
    \rd Y^{i}(s)=C^{i}(X(s),Y(s),t-s)\rmd s,\quad Y(0)=u,
    \]
    and 
    \[
    \rd q(s)=q(s)\frac{\partial C^{k}}{\partial u^{k}}(X(s),Y(s),t-s)\rmd s,\quad q(0)=1,
    \]
    for $i=1,2,3$ and $s\in[0,t]$, associated with $M=(M^{1},M^{2},M^{3})$ being the Brownian motion in $\mathbb{R}^{3}$ with $M(0)=0$ defined on some probability space.
    \item\label{claim: viscosity_solution} The equation (\ref{eq: S_rep_h}) is a unique viscosity solution to the PDF PDE (\ref{eq: PDF_PDE_homogeneous}).
\end{enumerate}
\end{thm}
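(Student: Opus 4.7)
The plan is to recognise the representation as a Feynman--Kac formula for the linear parabolic-transport equation: the operator $-u\cdot\nabla_x + \nu\Delta_x + C^i\partial_{u^i}$ is the infinitesimal generator of $(X,Y)$, and the zeroth-order term $\partial_{u^i}C^i$ appearing after expanding the divergence $\partial_{u^i}(pC^i)$ supplies the multiplicative functional $q$. Part (2) will then be handled by viscosity-solution machinery with a comparison principle for the resulting degenerate parabolic equation.

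For part (1), fix $t\in(0,T]$ and define $M(s) \coloneqq p(Y(s);X(s),t-s)\,q(s)$ for $s\in[0,t]$. Because $Y$ and $q$ have bounded variation and only $X$ carries noise, Itô's formula gives
\begin{align*}
dp(Y(s);X(s),t-s) = \bigl[-\partial_t p - Y^i\partial_{x^i} p + \nu\Delta_x p + C^i\partial_{u^i} p\bigr]\,ds + \sqrt{2\nu}\,\partial_{x^i} p\, dM^i(s),
\end{align*}
and by the PDE the bracket equals $-p\,\partial_{u^i}C^i$. Combining with $dq = q\,\partial_{u^k}C^k\,ds$ via the product rule, the $ds$-contributions cancel and $M$ is a local martingale. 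Assumption [A1] gives $|q(s)|\le\exp(3K_1 t)$ (since $|\partial_{u^i}C^i|\le 3K_1$), and Gronwall yields $\E[\sup_{s\le t}(|X(s)|^p+|Y(s)|^p)]<\infty$ for every $p\ge 1$; combined with an exit-time localisation through balls and the $\CC^{1,2,1}$ regularity of $p$, these estimates upgrade $M$ to a true martingale. Equating expectations at $s=0$ and $s=t$ yields the representation, and uniqueness within $\CC^{1,2,1}$ follows by applying the same identity to the difference of two classical solutions with identical initial data.

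For part (2), define $\tilde p(u;x,t)\coloneqq\E[p_0(Y(t);X(t))\,q(t)]$ directly from the SDE system. Continuity of $\tilde p$ in $(u,x,t)$ follows from standard stability of SDEs under [A1] together with the uniform integrability obtained from the polynomial bound in [A2] and the moment estimates above. The dynamic programming principle, applied together with Itô's formula on a stopping time at which $(X,Y)$ first leaves a small ball centred at a test point, shows in the usual way that $\tilde p$ is a viscosity solution of the PDE with operator $\mathcal{L}\varphi = -u\cdot\nabla_x\varphi + \nu\Delta_x\varphi + C^i\partial_{u^i}\varphi + \varphi\,\partial_{u^i}C^i$. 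Uniqueness then reduces to a comparison principle for this degenerate parabolic equation, which I would establish by the Crandall--Ishii--Lions doubling-variables method: Ishii's lemma controls the $\nu\Delta_x$ terms in the $x$-direction (where the diffusion is non-degenerate), while the first-order $u$-direction is handled by the Lipschitz continuity of $C$ via a penalty of the form $\alpha(|x-y|^2+|u-v|^2)$, supplemented by an $\varepsilon(|x|^2+|u|^2)$ term to cope with the unbounded domain and the polynomial growth of $p_0$.

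The principal technical obstacle is the comparison principle in part (2). The equation has the kinetic Fokker--Planck structure (diffusion in $x$, pure transport in $u$), and although comparison principles in this class are standard within the Crandall--Ishii--Lions framework, the coupling introduced by the transport term $-u\cdot\nabla_x$ between the two groups of doubled variables requires a careful choice of penalty to ensure that the usual matrix inequality is compatible with the degenerate diffusion. A secondary technical point is the martingale-versus-local-martingale distinction in part (1), which I resolve through the exit-time localisation and the uniform bound on $q$ together with the polynomial moment bounds on $(X,Y)$.
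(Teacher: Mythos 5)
Your part 1) is essentially the paper's own argument: the same process $Z(s)=p(Y(s);X(s),t-s)\,q(s)$, the same It\^o computation in which the drift cancels against the PDE, and the same exit-time localisation to pass from local martingale to true martingale, yielding the representation and uniqueness of classical solutions. Part 2) is where you genuinely diverge. The paper never verifies the viscosity property directly: it mollifies $C$ and $p_{0}$, adds an $\varepsilon\Delta_{u}$ regularisation (equivalently a $\sqrt{\varepsilon}\,\rd M$ noise in the $Y$-equation), solves the resulting non-degenerate parabolic problem classically with the representation from part 1), shows via Burkholder--Davis--Gundy and Gronwall that the regularised representations converge uniformly on compact sets to $\E\left[p_{0}(Y(t);X(t))q(t)\right]$, and then concludes by the stability of viscosity solutions under uniform convergence (Proposition 5.8 of Yong--Zhou), citing Ishii--Lions for uniqueness. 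You instead verify the viscosity property directly through the Markov/flow property of the SDE (your dynamic programming step) and propose to prove the comparison principle yourself by the Crandall--Ishii--Lions doubling method. Both routes are legitimate. The paper's buys economy: the verification comes for free from the vanishing-viscosity limit, and the hard comparison step is outsourced to the literature. Yours is more self-contained but makes you carry the heaviest load, namely a comparison principle for a kinetic-type equation whose transport coefficient $u\cdot\nabla_{x}$ grows linearly in the state, whose zeroth-order coefficient $\partial_{u^{i}}C^{i}$ is merely bounded (not continuous) when $C$ is only Lipschitz, and whose solutions have polynomial rather than quadratic growth --- so your $\varepsilon(|x|^{2}+|u|^{2})$ localising penalty would have to be replaced by penalties of order matching the growth of $p_{0}$. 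None of this is fatal, but it is genuine work that the paper's approximation-and-stability route avoids entirely.
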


\begin{proof}
If $C$ is Lipschitz, the previous system of SDEs for $(X,Y)$ has a unique solution $(X,Y)$ and $q$ is given via the exponential function. Notice $(X,Y,q)$ depends on $(x,u)$ as well. Let $\theta(s)=(Y(s);X(s),t-s)$, $\eta(s)=(X(s),Y(s),t-s)$ and denote $Z(s)=p(\theta(s))q(s)$. According to It\^o's formula,
\begin{align*}
\rd Z(s) & =q(s)\frac{\partial p}{\partial u^{i}}(\theta(s))\rd Y^{i}(s)+q(s)\frac{\partial p}{\partial x^{i}}(\theta(s))\rd X^{i}(s)+q(s)\nu\Delta_{x}p(\theta(s))\rmd s\\
 & \quad-q(s)\frac{\partial p}{\partial s}(\theta(s))\rmd s+p(\theta(s))\rmd q(s)\\
 & =q(s)\left(\frac{\partial(pC^{i})}{\partial u^{i}}(\eta(s))-p(\theta(s))\frac{\partial C^{i}}{\partial u^{i}}(\eta(s))-\frac{\partial p}{\partial x^{i}}(\theta(s))Y^{i}(s)+\nu(\Delta_{x}p)(\theta(s))-\frac{\partial p}{\partial s}(\theta(s))\right)\rmd s\\
 & \quad+\sqrt{2\nu}q(s)\frac{\partial p}{\partial x^{i}}(\theta(s))\rmd M^{i}(s)\\
 & =\sqrt{2\nu}q(s)\frac{\partial p}{\partial x^{i}}(\theta(s))\rmd M^{i}(s),
\end{align*}
it follows that $(Z(s))_{s\in [0,t]}$ is a local martingale 
\[
Z(s)=Z(0)+\sqrt{2\nu}\int_{0}^{s}q(r)\frac{\partial p}{\partial x^{i}}(\theta(r)) \rmd M^{i}(r),
\]
which associates with an increasing sequence of stopping times $\{\tau_n\}_{n\geq 0}$
\begin{align*}
    \tau_n=\min\left\{t,\,\inf\{s\geq 0\,:\,|(Y(s),X(s))-(u,x)|\geq n\}\right\}.
\end{align*}
After taking expectation and utilising the continuity of $p$, we apply the dominated convergence theorem to obtain 
\begin{align*}
    p(u;x,t)
    =Z(0)
    =\lim_{n\rightarrow \infty} \E\left[Z(\tau_n)\right] =\E\left[Z(t)\right],
\end{align*}
which coincides with the representation formula. 

Regarding the second part \ref{claim: viscosity_solution}, we introduce functions $C^i_\varepsilon(x,u,t)$ and $p_{0,\varepsilon}(u;x)$, which are smooth functions (e.g. obtained from convolution with mollifiers) and converge to $C,p_0$ uniformly on compact sets. We further define the system of non-degenerate SDEs on the time interval $[0,t]$
\begin{align*}
    \rd X^i_\varepsilon(s) &= -Y^i_\varepsilon(s)\rmd t+\sqrt{2\nu}\rmd M^i(s), \qquad\qquad \qquad X_\varepsilon(0)=x,\\
    \rd Y^i_\varepsilon(s) &=C^i_\varepsilon(X^i_\varepsilon(s), Y^i_\varepsilon(s),t-s)\rmd t+\sqrt{\varepsilon}\rmd M^i(s), \;\, Y_\varepsilon(0)=u,
\end{align*}
and bounded process $(q_{\varepsilon}(s))_{s\in [0,t]}$
\begin{align*}
    \rd q_{\varepsilon}(s)=q_{\varepsilon}(s)\frac{\partial C_{\varepsilon}^{k}}{\partial u^{k}}(X(s),Y(s),t-s)\rmd s,\quad q_{\varepsilon}(0)=1.
\end{align*}
Consider the following parabolic PDE
\begin{align*}
    \left(\frac{\partial}{\partial t}+u^{i}\frac{\partial}{\partial x^{i}}-\nu\Delta_{x}-\varepsilon\Delta_u \right)p_\varepsilon & =\frac{\partial}{\partial u^{i}}\left(p_{\varepsilon} C_\varepsilon^{i}\right),\\
    p_\varepsilon(u;x,t)&=p_{0,\varepsilon}(u;x),
\end{align*}
it admits a unique classical smooth solution $p_{\varepsilon}$ by classical PDE theory. Moreover, the solution possesses the representation 
\begin{align*}
    p_{\varepsilon}(u;x,t)=\E\left[p_{0,\varepsilon}(Y_{\varepsilon}(t),X_{\varepsilon}(t))q_{\varepsilon}(t)\right],
\end{align*}
if we make use of 1). Applying Burkholder-Davis-Gundy and Gronwall inequalities (or following routine arguments in \citep{Kloeden1992NSSDE}), we have
\begin{align*}
    \E\left[\sup_{0\leq s\leq t}\left|(Y_\varepsilon(s),X_\varepsilon(s))-(Y(s),X(s))\right|^2\right]\rightarrow 0 
\end{align*}
as $\varepsilon\rightarrow 0$. Therefore, at least through a subsequence, $(Y_\varepsilon(s),X_\varepsilon(s))\rightarrow (Y(s),X(s))$ almost surely, leading to 
\begin{align*}
    p_{\varepsilon}(u;x,t)=\E\left[p_{0,\varepsilon}(Y_{\varepsilon}(t),X_{\varepsilon}(t))q_{\varepsilon}(t)\right]\rightarrow \E\left[p_{0}(Y(t),X(t))q(t)\right]=p(u;x,t), 
\end{align*}
uniformly on compact sets. Therefore equation (\ref{eq: S_rep_h}) is a viscosity solution by Proposition 5.8 in \citep{yong1999stochastic}, whereas the uniqueness follows from \citep{ishii1990viscosity}.
\end{proof}
The PDF PDE (\ref{eq: PDF_PDE_homogeneous}) boils down to a degenerate parabolic PDE after the weak homogeneity has been applied. Apart from solving the PDE, the stochastic representation (\ref{eq: S_rep_h}) offers a route to numerically solving the PDE. The PDE (\ref{eq: PDF_PDE_homogeneous}) has 6 dimensions in space and 1 dimension in time, which is a challenging task for classical finite difference methods due to the size of grid in space. Instead, we can simulate the solution based on Monte-Carlo methods directly.
\begin{rem}
Moreover, the stochastic representation formula can be extended to non-weakly homogeneous turbulent flows. Indeed,
\begin{align*}
p(u;x,t) & =\E\left[p_{0}(Y(t);X(t))q(t)\right],
\end{align*}
is a solution to (\ref{eq: PDF_PDE_origin}) subject to the system
of SDEs 
\begin{align*}
\rd Y^{i}(s) & =C^{i}(X(s),Y(s),t-s)\rmd s, \quad Y(0)=u,\\
\rd X^{i}(s) & =-Y^{i}(s)\rmd s+\sqrt{2\nu}\rmd M^{i}(s),\quad X(0)=x,
\end{align*}
with $C^{i}=\nu\frac{\partial B_{k}^{i}}{\partial x^{k}}-\nu A^{i}+Q^{i}$ for
all $i\in\{1,2,3\}$ if we impose the following condition on $p$:
\begin{align*}
\partial_{u^{i}}\left(B^{i}(x,u,t)\cdot\nabla_{x}p(u;x,t)\right) & =0.
\end{align*}

We are now in a position to verify the solution of our PDF PDE (\ref{eq: PDF_PDE_homogeneous}) is indeed a PDF. That is, it must carry two properties, including positivity and the mass preservation property. It turns out under some technical assumptions, the mass preservation property is equivalent to the divergence-free condition.
\end{rem}

\begin{lem}\label{lem: weak_homogeneous_properties}
Let $C$ be a given function which satisfies assumption \normalfont{\textbf{\ref{assumption: LipschitzCondition}}}, while $p_0(u;x)$ satisfies assumption \normalfont{\textbf{\ref{assumption: Polygrowth}}}. Let $p(u;x,t)$ be to the solution to equation (\ref{eq: PDF_PDE_homogeneous}), we have the following statements:
\begin{enumerate}[label=\arabic*)]
    \item If $p_{0}\geq0$ then $p(u;x,t)\geq0$.
    \item \label{claim: vanishing_boundary} If we further assume there exists
    $m\geq1$ such that $\lim_{|u|\rightarrow\infty}p_{0}(u;x)|u|^{m}=0$ uniformly in $x$ as $|u|\rightarrow\infty$, then $\lim_{|u|\rightarrow\infty}p(u;x,t)|u|^{m}=0$
    uniformly in $x$. Moreover, if $m>q+3$ for some $q\geq1$, then
    $\intr|u|^{q}p(u;x,t)\rmd u<\infty$.
    \item Suppose $p_{0}$ also satisfy the conditions in  \ref{claim: vanishing_boundary} and $\intr p_{0}(u;x)\rmd u=1$, we have $\int_{\R^{3}}p(u;x,t)\rmd u=1$ for all $x,t$ if and only if 
    \begin{equation}
        \frac{\partial}{\partial x^{i}}\int_{\R^{3}}u^{i}p(u;x,t)\rmd u=0.\label{eq: div-free for weak flow}
    \end{equation}
\end{enumerate}
\end{lem}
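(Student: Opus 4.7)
The argument throughout rests on the stochastic representation (\ref{eq: S_rep_h}) established in Theorem~\ref{thm: weakly_homogeneous}, together with integrating the PDF PDE (\ref{eq: PDF_PDE_homogeneous}) against the constant function in $u$. Part 1 is then immediate: writing $q(s)=\exp\bigl(\int_{0}^{s}\partial_{u^{k}}C^{k}(X(r),Y(r),t-r)\rmd r\bigr)$, the Lipschitz hypothesis on $C$ forces $|\partial_{u^{k}}C^{k}|$ to be bounded, so $q(s)$ is almost surely strictly positive; hence $p(u;x,t)=\E[p_{0}(Y(t);X(t))q(t)]\geq 0$ whenever $p_{0}\geq 0$.

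For Part 2, which is the technical heart, I would analyse the pathwise ODE $\rmd Y/\rmd s=C(X(s),Y(s),t-s)$. Differentiating $\log(1+|Y|^{2})$ and using the linear-growth bound $|C|\leq K_{2}(1+|u|)$ gives $\bigl|\tfrac{d}{ds}\log(1+|Y(s)|^{2})\bigr|\leq 2K_{2}$ uniformly in $\omega$, whence
\[
e^{-2K_{2}t}(1+|u|^{2})\leq 1+|Y(t)|^{2}\leq e^{2K_{2}t}(1+|u|^{2})\qquad\text{almost surely.}
\]
In particular $|Y(t)|\to\infty$ pathwise as $|u|\to\infty$. Since $q(t)$ is uniformly bounded, substitution into the representation formula yields
\[
|u|^{m}p(u;x,t)\leq C(t)\,\E\bigl[(1+|Y(t)|)^{m}\,p_{0}(Y(t);X(t))\bigr].
\]
The hypothesis $|v|^{m}p_{0}(v;y)\to 0$ uniformly in $y$ makes the integrand tend to zero pathwise as $|u|\to\infty$, while the polynomial-growth of $p_{0}$ combined with classical moment bounds for the SDE for $(X,Y)$ supplies an integrable dominating function of the form $C(1+|X(t)|)^{N}$; dominated convergence then gives the uniform-in-$x$ decay of $p$. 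The moment bound follows at once: the decay implies $p(u;x,t)\leq C|u|^{-m}$ for $|u|$ large, and $\int_{|u|\geq R}|u|^{q-m}\rmd u<\infty$ in $\R^{3}$ whenever $m>q+3$.

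For Part 3, I would integrate the PDF PDE (\ref{eq: PDF_PDE_homogeneous}) in $u$ over $\R^{3}$. The velocity-space divergence $\int \partial_{u^{i}}(pC^{i})\rmd u$ vanishes by the divergence theorem: the boundary flux through $|u|=R$ is controlled by $\sup_{|u|=R}|pC^{i}|\cdot R^{2}\leq CR^{3}\cdot |u|^{-m}(1+|u|)\to 0$ as $R\to\infty$, provided $m$ is taken large enough (which Part 2 allows). Setting $\phi(x,t):=\int_{\R^{3}} p(u;x,t)\rmd u$, what remains is
\[
\partial_{t}\phi-\nu\Delta_{x}\phi=-\frac{\partial}{\partial x^{i}}\int_{\R^{3}}u^{i}p(u;x,t)\,\rmd u,\qquad \phi(x,0)\equiv 1.
\]
If (\ref{eq: div-free for weak flow}) holds, then $\phi-1$ solves the heat equation with zero initial data in a class with controlled growth, hence vanishes identically by the standard uniqueness theorem, i.e.\ $\phi\equiv 1$. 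Conversely, $\phi\equiv 1$ forces the right-hand side to be zero, which is precisely (\ref{eq: div-free for weak flow}).

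The main obstacle is Part 2: transferring the assumed uniform-in-$x$ decay of $p_{0}$ to a uniform-in-$x$ decay of $p$ requires both a robust pathwise lower bound on $|Y(t)|$ in terms of $|u|$ and an integrable dominating function whose $\omega$-wise bound does \emph{not} deteriorate as $x$ varies. Once this uniform decay is in hand, Part 1 and Part 3 are essentially book-keeping (positivity of $q$ and an integration by parts in $u$ justified by the decay).
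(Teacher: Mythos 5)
Your proposal is correct and follows essentially the same route as the paper: positivity via the stochastic representation and positivity of $q$, decay transfer via a two-sided exponential bound relating $|Y(t;u)|$ to $|u|$ (the paper obtains it from a comparison theorem applied to $Z=|Y|^2$, you from differentiating $\log(1+|Y|^2)$ --- the same estimate), and mass conservation by integrating the PDE in $u$ and discarding the $u$-divergence term using the decay, leaving a heat-type equation for $\int_{\R^3}p\,\rmd u$. The paper's own proof is equally informal at exactly the two spots you gloss over: the dominated-convergence step (where the uniformity in $x$ of the decay of $p_0$ in fact makes a $u$-dependent dominating function such as your $C(1+|X(t)|)^N$ unnecessary), and the tacit need for $m>3$ so that the boundary flux $\sup_{|u|=R}|pC|\,R^2$ vanishes.
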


\begin{proof}
1) follows directly from the stochastic representation formula (\ref{eq: S_rep_h}). To deal with \ref{claim: vanishing_boundary}, let $Y(s;u)$ and $X(s;u)$ denote the solutions to the SDEs in Theorem \ref{thm: weakly_homogeneous}, while we put $u$ in the representation on $X,Y$ to emphasize their dependence on the initial data $u$. As a result of the Lipschitz condition, we deduce that $|C(X(s;u),Y(s;u),t-s)\cdot Y(s;u)|\leq\frac{1}{2}K(1+|Y(s;u)|^{2})$ for some $K>0.$ Introducing scalar processes $Z(s;u)\coloneqq|Y(s;u)|^{2}$, $\alpha(s;u)$ and $\beta(s;u)$:
\begin{align*}
\rd Z(s;u) & =2C(Y(s;u),X(s;u),t-s)\cdot Y(s;u)\rmd s,\\
\rd\alpha(s;u) & =-K(1+\alpha(s;u))\rmd s,\\
\rd\beta(s;u) & =K(1+\beta(s;u))\rmd s,
\end{align*}
with $\alpha(0;u)=\beta(0;u)=|Y(0;u)|^{2}=|u|^{2}$, $Z(s;u)$ must satisfy the inequality $\alpha(s;u)\leq Z(s;u)\leq\beta(s;u)$, since we are able to make use of the comparison theorem (see for example, \citep{McNabb1986}) and 
\begin{align*}
-K(1+Z(s;u)) & \leq2C(Y(s;u),X(s;u),t-s)\cdot Y(s;u)\leq K(1+Z(s;u))
\end{align*}
for all $Z(s;u)\geq0$. Therefore
\[
Z(s;u)\in[\exp(-Ks)(|u|^{2}+1)-1,\exp(Ks)(|u|^{2}+1)-1].
\]
Applying the dominated convergence theorem, we get
\begin{align*}
\lim_{|u|\rightarrow\infty}|p(u;x,t)||u|^{m} & \leq\exp(Kt)\E\left[\lim_{|u|\rightarrow\infty}|p_{0}(Y(t;u);X(t,u))||u|^{m}\right]=0.
\end{align*}
Recall that $p_{0}(u;x)$ has at most polynomial growth in $(u,x)$, i.e. $p_{0}(u;x)\leq L(|u|^{l}+|x|^{l})$ for some $l,L\geq1$. Denoting $B(0,R)$ as the ball centred at the origin with radius $R>0$, the moment bound can be obtained by 
\begin{align*}
\intr|u|^{q}p(u;x,t)\rmd u & \leq\int_{B(0,R)}|u|^{q}\exp\left(Kt\right)\E\left[|Y(s;u)|^{l}+|X(s;u)|^{l}\right]\rmd u\\
 & \quad+\int_{\R^{3}\backslash B(0,R)}|u|^{m}p(u;x,t)\frac{1}{|u|^{m-q}}\rmd u\\
 & <\infty,
\end{align*}
where $R$ is chosen large enough such that $|u|^{m}p(u;x,t)\leq K_{1}$ for all $|u|>R$ and some constant $K_{1}>0$. 

In order to check 3), we consider $f(x,t)\coloneqq\int_{\R^{3}}p(u;x,t)\rmd u$. Integrating the PDF PDE with respect to the variable $u$, we obtain that 
\begin{align*}
\left(\frac{\partial}{\partial t}-\nu\Delta_{x}\right)f(x,t) & =-\frac{\partial}{\partial x^{i}}\int_{\R^{3}}u^{i}p(u;x,t)\rmd u+\int_{\R^{3}}\nabla_{u}\cdot\left(pC\right)\rmd u,\\
f(x,0) & =1.
\end{align*}
Therefore, $|pC|\rightarrow0$ and the conclusion follows immediately.
\end{proof}
\begin{rem}
There exists a (viscosity) solution to equation (\ref{eq: PDF_PDE_homogeneous}) when we assume $C$ is Lipschitz in $(x,u)$ and uniformly continous in $t$, but the fact that $C^{i}$ is bounded in $x$ for all $i\in\{1,2,3\}$ is crucial to part \ref{claim: vanishing_boundary} of Lemma \ref{lem: weak_homogeneous_properties}. Assuming $C^{i}(y,x,t)=2a^{i}y^{i}+b^{i}x^{i}+c^{i}(t)$ for some $a^{i},b^{i}\in\R,$ $(a^{i})^{2}>b^{i}$ and bounded function $c^{i}$, where no Einstein's convention is applied, the solution to the system of SDEs $(Y^{i},X^{i})^{T}$ is given by
\begin{align*}
\left[\begin{array}{c} Y^{i}(s)\\
                        X^{i}(s)\end{array}\right] 
& =\exp\left(L^{i}s\right)\left[\begin{array}{c}
u^{i}\\
x^{i}
\end{array}\right]+\int_{0}^{s}\exp\left(L^{i}(s-r)\right)\left[\begin{array}{c}
c^{i}(r)\\
0
\end{array}\right]\rmd r+\int_{0}^{s}\exp\left(M^{i}(s-r)\right)\left[\begin{array}{c}
0\\
\sqrt{2\nu}
\end{array}\right]\rmd M_{r}^{i},
\end{align*}
with $L^{i}\coloneqq\left[\begin{array}{cc} 2a^{i} & b^{i}\\-1 & 0 \end{array}\right]$. The corresponding eigenvalues of $L^{i}$ are $\lambda_{1,i}=a^{i}+\sqrt{(a^{i})^{2}-b^{i}}$ and $\lambda_{2,i}=a^{i}-\sqrt{(a^{i})^{2}-b^{i}}$, which implies
\begin{align*}
\exp\left(L^{i}t\right) & =\frac{1}{\lambda_{2,i}-\lambda_{1,i}}\left[\begin{array}{cc}
-\lambda_{1,i}\exp(\lambda_{1,i}t)+\lambda_{2,i}\exp(\lambda_{2,i}t) & b^{i}\left(-\exp(\lambda_{1,i}t)+\exp(\lambda_{2,i}t)\right)\\
\exp(\lambda_{1,i}t)-\exp(\lambda_{2,i}t) & \lambda_{2,i}\exp(\lambda_{1,i}t)-\lambda_{1,i}\exp(\lambda_{2,i}t)
\end{array}\right]
\end{align*}
and $Y^{i}(s)$ is independent of $u^{i}$ provided $s=\frac{\ln(\lambda_{2,i}/\lambda_{1,i})}{\lambda_{1,i}-\lambda_{2,i}}>0$. In particular, $a^{i}<0$ and $b^{i}>0$ with $(a^{i})^{2}>b^{i}$ lead to $s>0$, and therefore 
\[
p(u;x,t)=\E\left[p_{0}(Y(t);X(t))q(t)\right]
\]
not necessarily vanishes at $|u|\rightarrow\infty$ for all $(x,t)$ if we only assume $p_{0}(u;x)\to0$ as $|u|\rightarrow\infty$ i.e. $\intr p(u;x,t)\rmd u =\infty$ for some $t>0$.
\end{rem}

We introduce a sufficient condition for the mass-preservation property when $C^{i}$ satisfies the constraints in the following Lemma.
\begin{lem}
Suppose $C$, $p_{0}$ satisfy condition \ref{claim: vanishing_boundary} in Lemma \ref{lem: weak_homogeneous_properties}
with $m\geq2$ and define $\{U(x,t)\}_{x,t}$ to be the corresponding random field such that $U(x,t)$ has the PDF $p(u;x,t)$ for all $(x,t)$. If equation (\ref{eq: div-free for weak flow}) is satisfied, we have for all $(x,t)\in\R^{3}\times\R_{+}$,
\begin{align*}
\nabla_{x}\cdot\E\left[C(x,U(x,t),t)\right] & =-\partial_{x^{i}}\partial_{x^{j}}\E\left[U^{i}(x,t)U^{j}(x,t)\right].
\end{align*}
\end{lem}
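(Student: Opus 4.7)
The plan is to derive a moment identity from the PDF PDE (\ref{eq: PDF_PDE_homogeneous}) by testing against $u^j$, and then to apply a spatial divergence which will kill the time and viscous terms thanks to the assumed divergence-free condition (\ref{eq: div-free for weak flow}).

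More precisely, I would first observe that $\E\!\left[C^{j}(x,U(x,t),t)\right]=\intr C^{j}(x,u,t)\,p(u;x,t)\rmd u$ and $\E\!\left[U^{i}(x,t)U^{j}(x,t)\right]=\intr u^{i}u^{j}p(u;x,t)\rmd u$, so everything can be rewritten as an integral statement about $p$. The natural move is to multiply (\ref{eq: PDF_PDE_homogeneous}) by $u^{j}$ and integrate over $u\in\R^{3}$. Pulling the $x$- and $t$-derivatives outside the integral and performing an integration by parts on the right-hand side term $u^{j}\partial_{u^{i}}(pC^{i})$ should yield the vector moment equation
\begin{equation*}
\frac{\partial}{\partial t}\intr u^{j}p\rmd u+\frac{\partial}{\partial x^{i}}\intr u^{i}u^{j}p\rmd u=\nu\Delta_{x}\intr u^{j}p\rmd u-\intr C^{j}p\rmd u.
\end{equation*}
Applying $\partial_{x^{j}}$ to both sides and invoking the assumed divergence-free condition $\partial_{x^{j}}\intr u^{j}p\rmd u=0$ (which kills both the time-derivative term and the $\nu\Delta_{x}$ term), I then read off exactly
\begin{equation*}
\partial_{x^{i}}\partial_{x^{j}}\E\!\left[U^{i}U^{j}\right]=-\partial_{x^{j}}\E\!\left[C^{j}\right],
\end{equation*}
which is the claim.

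The main obstacle is justifying the boundary terms and the interchange of derivatives and integrals. The integration by parts in $u$ requires $u^{j}p(u;x,t)C^{i}(x,u,t)\to 0$ as $|u|\to\infty$; given the linear growth of $C^{i}$ from Assumption \textbf{\ref{assumption: LipschitzCondition}}, the hypothesis $m\geq 2$ in Lemma \ref{lem: weak_homogeneous_properties} \ref{claim: vanishing_boundary} gives uniform decay $|u|^{2}p(u;x,t)\to 0$, which suffices to kill the boundary term. Integrability of the resulting moments so that the derivatives can be exchanged with the $u$-integrals follows from the second part of Lemma \ref{lem: weak_homogeneous_properties} \ref{claim: vanishing_boundary}; if $m$ is not quite large enough, one may first truncate $u$ to the ball $B(0,R)$, carry out the manipulations there, and then pass to the limit $R\to\infty$ using the uniform decay estimate. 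Spatial smoothness of $\E[U^{j}]$ and $\E[U^{i}U^{j}]$ so that the outer $\partial_{x^{j}}$ can be applied follows from the $\CC^{1,2,1}$ regularity of $p$ asserted in Theorem \ref{thm: weakly_homogeneous} combined with dominated convergence. Once these analytical details are in place, the identity is essentially an algebraic consequence of the PDE and the divergence-free constraint.
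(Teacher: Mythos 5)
Your proposal is correct and is essentially the paper's own argument: the paper applies $u\cdot\nabla_{x}$ to the PDE (\ref{eq: PDF_PDE_homogeneous}) and then integrates in $u$, while you integrate the PDE against $u^{j}$ first (obtaining the intermediate moment equation) and then apply $\partial_{x^{j}}$ — the two orders of operations commute, so the terms produced, the integration by parts killing the boundary term via $m\geq 2$ and the linear growth of $C$, and the use of (\ref{eq: div-free for weak flow}) to annihilate the $\partial_{t}$ and $\nu\Delta_{x}$ contributions are identical in both treatments.
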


\begin{proof}
We apply $u\cdot\nabla_{x}$ on both sides of the PDE (\ref{eq: PDF_PDE_homogeneous})
followed by integrating with respect to $u$, resulting
\begin{align*}
\int_{\R^{3}}(u\cdot\nabla_{x})\left(\partial_{t}-\nu\Delta_{x}\right)p(u;x,t)\rmd u & =\left(\partial_{t}-\nu\Delta_{x}\right)\nabla_{x}\cdot\int_{\R^{3}}up(u;x,t)\rmd u=0
\end{align*}
as well as 
\begin{align*}
\int_{\R^{3}}(u\cdot\nabla_{x})u\cdot\nabla_{x}p(u;x,t)\rmd u & =\int_{\R^{3}}u^{i}u^{j}\partial_{x^{i}}\partial_{x^{j}}p(u;x,t)\rmd u\\
 & =\partial_{x^{i}}\partial_{x^{j}}\E\left[U^{i}(x,t)U^{j}(x,t)\right].
\end{align*}
Eventually, the right-hand side can be written in terms of 
\begin{align*}
\int_{\R^{3}}(u\cdot\nabla_{x})\nabla_{u}\cdot\left(p(u;x,t)C(x,u,t)\right)\rmd u & =\int_{\R^{3}}\partial_{x^{i}}\nabla_{u}\cdot\left(u^{i}p(u;x,t)C(x,u,t)\right)\rmd u\\
 & \quad-\int_{\R^{3}}\nabla_{u}\cdot\left(p(u;x,t)C(x,u,t)\right)\rmd u\\
 & =-\nabla_{x}\cdot\E\left[C(x,U(x,t),t)\right],
\end{align*}
provided $|p(u;x,t)C(x,u,t)u^{i}|\rightarrow0$ as $|u|\rightarrow\infty$. This has been guaranteed by the growth condition on $C^{i}$.
\end{proof}
If we further assume the incompressible viscous turbulent fluid flow is both weakly homogeneous and weakly isotropic, the PDF PDE is a parabolic-transport equation 
\begin{align}
\left(\frac{\partial}{\partial t}+u^{i}\frac{\partial}{\partial x^{i}}-\nu\Delta_{x}\right)p & =\frac{\partial}{\partial u^{i}}\left(pC^{i}\right),\label{eq: PDF_PDE_isotropic}\\
p(u;x,0) & =p_{0}(u;x),\nonumber 
\end{align}
where $C$ is essentially 
\begin{align*}
C^{i}(x,u,t) & =-\nu A^{i}(u,t)+Q^{i}(x,u,t)\\
 & =-\nu A^{i}(u,t)+\int_{\R^{3}}\frac{y^{i}}{4\pi|y|}\frac{\partial^{2}\left(\sigma^{jk}+b^{j}b^{k}\right)}{\partial y^{k}\partial y^{j}}(y,u,t)\rmd y
\end{align*}
depending only on $u$ and $t$ but not on $x$. 
\begin{cor}
Let $C^{i}(u,t)$ be Lipschitz continuous in $u$ and uniformly continuous in $t$, for all $i\in\{1,2,3\}$. For $t>0$ and $(x,u)\in\R^{3}\times\R^{3}$, $(X,Y,q)$ denotes the unique solution to the system of equations for all $i\in \{1,2,3\}$ and $s\in[0,t]$: 
\begin{align*}
\rd X^{i}(s) & =-Y^{i}(s)\rmd s+\sqrt{2\nu}\rmd M_{s}^{i},\quad X(0)=x,\\
\rd Y^{i}(s)&=C^{i}(Y(s),t-s)\rmd s,\quad Y(0)=u,
\end{align*}
and 
\[
\rd q(s)=-q(s)\frac{\partial C^{k}}{\partial u^{k}}(Y(s),t-s)\rmd s,\quad q(0)=1.
\]
Suppose $p(u;x,t)$ is a $\mathcal{C}^{1,1,1}(\R^{3}\times\R^{3}\times[0,T])$ solution to (\ref{eq: PDF_PDE_isotropic}) then 
\[
p(u;x,t)=q(t)\int_{\R^{3}}H(x,t,z;u)p_{0}(Y(t);z)\rmd z,
\]
where 
\[
H(x,t,y;u)=\frac{1}{(4\pi\nu t)^{3/2}}\exp\left(-\frac{\left|y-x+\int_{0}^{t}Y(s)\rmd u\right|^{2}}{4\nu t} \right).
\]
Therefore if $p_{0}(u;x)\geq0$ for every $x$, then so is $p(u;x,t)$ for any $(x,t)$. If $p_{0}(u;x)$ is a PDF for all $x$, then $p(u;x,t)$ is again a PDF for all $(x,t)$ if only if the following constraint holds: 
\[
\frac{\partial}{\partial x^{i}}\int_{\R^{3}}u^{i}p(u;x,t)\rmd u=0.
\]
\end{cor}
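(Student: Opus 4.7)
The plan is to invoke Theorem \ref{thm: weakly_homogeneous} and exploit the defining feature of the weakly isotropic regime, namely that $C^{i}$ depends only on $(u,t)$ and not on $x$. This decouples the system: the $Y$-equation becomes an autonomous ODE
\[
\frac{\rd Y^{i}(s)}{\rd s} = C^{i}(Y(s), t-s), \quad Y(0) = u,
\]
whose unique solution is deterministic given $(u,t)$, and then $q(s) = \exp\!\bigl(\pm\int_{0}^{s}\partial_{u^{k}}C^{k}(Y(r),t-r)\,\rd r\bigr)$ is likewise deterministic and strictly positive. Existence and uniqueness of both come from the Lipschitz hypothesis on $C$.

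With $Y$ pinned down, the $X$-equation becomes $\rd X^{i}(s) = -Y^{i}(s)\,\rd s + \sqrt{2\nu}\,\rd M^{i}_{s}$, whose explicit solution is $X(t) = x - \int_{0}^{t} Y(s)\,\rd s + \sqrt{2\nu}\,M_{t}$. In particular $X(t)$ is Gaussian with mean $x - \int_{0}^{t} Y(s)\,\rd s$ and covariance $2\nu t\,I_{3}$, so its probability density at $z\in\R^{3}$ is exactly $H(x,t,z;u)$ (reading $\rd s$ for the integration variable appearing in the exponent of the stated kernel). Theorem \ref{thm: weakly_homogeneous} now gives $p(u;x,t) = \E[p_{0}(Y(t);X(t))\,q(t)]$; since $q(t)$ and $Y(t)$ are non-random they factor out of the expectation, and $\E[p_{0}(Y(t);X(t))] = \intr H(x,t,z;u)\,p_{0}(Y(t);z)\,\rd z$ by the Gaussian density computation above, producing the claimed formula. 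Positivity is then immediate: $H$ is nonnegative, $q(t) > 0$, and $p_{0} \geq 0$ by hypothesis.

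The equivalence between $\intr p(u;x,t)\,\rd u = 1$ and the divergence-free constraint $\partial_{x^{i}}\intr u^{i} p(u;x,t)\,\rd u = 0$ is not a new assertion but a direct specialisation of part 3 of Lemma \ref{lem: weak_homogeneous_properties}, since equation (\ref{eq: PDF_PDE_isotropic}) is exactly (\ref{eq: PDF_PDE_homogeneous}) with the specific isotropic choice of $C$. The only non-routine checks are confirming that the isotropic $C^{i}(u,t)$ inherits assumption \textbf{\ref{assumption: LipschitzCondition}} and that $p_{0}$ has the polynomial decay required by part 2 of that lemma; neither presents a genuine obstacle under the corollary's stated hypotheses. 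The conceptual content of the result lies entirely in the explicit Gaussian kernel that emerges once the $(X,Y)$ decoupling is exploited, so no part of the argument requires new analytic machinery beyond what is already established in Theorem \ref{thm: weakly_homogeneous} and Lemma \ref{lem: weak_homogeneous_properties}.
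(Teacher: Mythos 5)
Your proposal is correct and takes essentially the same route as the paper's own proof: both invoke Theorem \ref{thm: weakly_homogeneous}, observe that $Y$ and $q$ are deterministic while $X(t)$ is Gaussian with density $H(x,t,\cdot\,;u)$, factor the deterministic quantities out of the expectation, and defer positivity and the mass-preservation equivalence to Lemma \ref{lem: weak_homogeneous_properties}. Your side remarks (reading $\rmd s$ for the misprinted integration variable in the kernel, and the sign discrepancy in the $q$-equation between the corollary and Theorem \ref{thm: weakly_homogeneous}) are accurate but do not alter the argument.
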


\begin{proof}
By definition, $Y$ and $q$ are deterministic processes, while $X$ is a $3$-dimensional Gaussian process such that for all $E \in \mathscr{B}(\R^3)$ and $s\in [0,t]$,
\begin{align*}
    \P(X_s \in E) = \int_E H(x,s,z;u)\rmd z.
\end{align*}
As a result of Theorem \ref{thm: weakly_homogeneous}, 
\begin{align*}
p(u;x,t)  =q(t)\mathbb{E}\left[p(Y(t);X(t),0)\right]
  =q(t)\int_{\R^{3}}H(x,t,z;u)p_{0}(Y(t);z)\rmd z.
\end{align*}
The remaining part is a consequence of Lemma \ref{lem: weak_homogeneous_properties}.
\end{proof}

\subsection{Inviscid flows}

The modelling of inviscid incompressible flows is significantly simplified comparing to modelling viscid flow, since the PDF PDE can be solved without imposing the weak homogeneity or the weak isotropy conditions. As $\nu=0$, the velocity $U(x,t)$ fulfils the Euler equations, while the PDF PDE becomes the following transport equation 
\begin{align}
\begin{split}\label{eq: PDF_PDE_inviscid}
    \frac{\partial p}{\partial t}+u^{i}\frac{\partial p}{\partial x^{i}} & =\frac{\partial}{\partial u^{i}}\left(pQ^{i}\right),\\
    p(u;x,0) & =p_{0}(u;x).
\end{split}
\end{align}

\begin{thm}
\label{thm: PDF_PDE_inviscid}Suppose that $Q^{i}(x,u,t)$ satisfies assumption \normalfont{\textbf{\ref{assumption: LipschitzCondition}}}, $p_0(u;x)$ satisfies assumption \normalfont{\textbf{\ref{assumption: Polygrowth}}} and $p\in \mathcal{C}^{1,1,1}(\R^3, \R^3, [0,T])$ for some fixed $T>0$ is a solution to the transport PDE (\ref{eq: PDF_PDE_inviscid}), we have
\begin{equation}
p(u;x,t)=p_0(Y(t);X(t))q(t)\label{eq: rep_inviscid}
\end{equation}
for every $t>0$, $x$ and $u$, where $(X,Y,q)$ is the unique solution to the following system of ODEs: 
\begin{align}
\begin{split}\label{eq: ODEsystem}
    \rd X^{i}(s) & =-Y^{i}(s)\rmd s,\quad X(0)=x, \\
    \rd Y^{i}(s) & =Q^{i}(X(s),Y(s),t-s)\rmd s,\quad Y(0)=u\\
    \rd q(s)&=q(s)(\nabla_{u}\cdot Q)(X(s),Y(s),t-s)\rm \rmd s,\quad q(0)=1.
\end{split}
\end{align}
for $i=1,2,3$. Moreover, if $p(u;x,0)\geq0$ for every $x$, then so is $p(u;x,t)$ for any $(x,t)$. If $p(u;x,0)$ is a PDF for all $x$ and $p_{0}(u;x)|u|^{m}\rightarrow0$ uniformly in $x$ for some $m\geq1$, then $p(u;x,t)$ is again a PDF for all $(x,t)$ if only if the the following constraint is satisfied:
\[
\frac{\partial}{\partial x^{i}}\intr u^{i}p(u;x,t)\rmd u=0.
\]
\end{thm}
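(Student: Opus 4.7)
The plan is to establish (\ref{eq: rep_inviscid}) by the method of characteristics, mirroring the proof of Theorem~\ref{thm: weakly_homogeneous} but simplified by the fact that $\nu=0$ eliminates all diffusion and turns the characteristic system of SDEs into the deterministic ODEs (\ref{eq: ODEsystem}). Under the Lipschitz hypothesis on $Q$, Picard--Lindel\"of yields a unique global solution $(X,Y,q)$ on $[0,t]$, and the scalar equation for $q$ integrates explicitly to
\begin{align*}
q(s) = \exp\left(\int_{0}^{s}(\nabla_{u}\cdot Q)(X(r),Y(r),t-r)\,\rmd r\right) > 0,
\end{align*}
so in particular $q$ never vanishes.

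For fixed $t>0$ and $(x,u)\in\R^{3}\times\R^{3}$, I introduce $\theta(s)\coloneqq(Y(s);X(s),t-s)$, $\eta(s)\coloneqq(X(s),Y(s),t-s)$, and $Z(s)\coloneqq p(\theta(s))\,q(s)$ on $[0,t]$. Applying the classical chain rule (no It\^o correction is needed as $X$ has no martingale part) and using (\ref{eq: ODEsystem}) yields
\begin{align*}
\frac{dZ}{ds} = q\bigl[Q^{i}(\eta)\,\partial_{u^{i}}p(\theta) - Y^{i}\,\partial_{x^{i}}p(\theta) - \partial_{t}p(\theta)\bigr] + p(\theta)\,q\,(\nabla_{u}\cdot Q)(\eta).
\end{align*}
Rewriting (\ref{eq: PDF_PDE_inviscid}) as $\partial_{t}p = -u^{i}\partial_{x^{i}}p + Q^{i}\partial_{u^{i}}p + p(\nabla_{u}\cdot Q)$ and substituting at $(\theta,\eta)$, the bracketed quantity collapses to $-p(\theta)(\nabla_{u}\cdot Q)(\eta)$, which cancels the final term. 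Hence $Z$ is constant on $[0,t]$, and evaluating at the two endpoints gives $p(u;x,t) = Z(0) = Z(t) = p_{0}(Y(t);X(t))\,q(t)$.

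Positivity follows directly from the representation since $q(t)>0$. For the mass-preservation claim, set $f(x,t)\coloneqq\intr p(u;x,t)\,\rmd u$. The Gronwall argument used in Lemma~\ref{lem: weak_homogeneous_properties}(2) (in fact easier here because $\nu=0$) compares $|Y(s;u)|^{2}$ with solutions of the scalar ODEs $\dot{\alpha}=-K(1+\alpha)$, $\dot{\beta}=K(1+\beta)$ and so propagates the uniform decay $p_{0}(u;x)|u|^{m}\to 0$ forward in time to $p(u;x,t)|u|^{m}\to 0$; in particular $|p(u;x,t)\,Q^{i}(x,u,t)|\to 0$ as $|u|\to\infty$. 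Integrating the PDE over $\R^{3}$ in $u$ and discarding the boundary term $\intr \partial_{u^{i}}(pQ^{i})\,\rmd u = 0$ yields
\begin{align*}
\frac{\partial f}{\partial t}(x,t) + \frac{\partial}{\partial x^{i}}\intr u^{i}p(u;x,t)\,\rmd u = 0,
\end{align*}
and combining with $f(x,0)=1$ delivers the claimed equivalence $f\equiv 1 \Longleftrightarrow \partial_{x^{i}}\intr u^{i}p\,\rmd u = 0$.

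The only genuine obstacle is technical: justifying the vanishing boundary integral and the finiteness of $\intr u^{i}p\,\rmd u$ both hinge on the uniform decay of $p$ at $|u|=\infty$, which is obtained by transporting the assumed decay of $p_{0}$ along the characteristic flow via the Gronwall bound $|Y(s;u)|^{2} \in [e^{-Ks}(|u|^{2}+1)-1,\,e^{Ks}(|u|^{2}+1)-1]$. Apart from this, the argument reduces to the purely algebraic chain-rule identity displayed above, and the decoupling $Y \perp X$ visible in (\ref{eq: ODEsystem}) means no probabilistic localisation by stopping times is required, unlike in Theorem~\ref{thm: weakly_homogeneous}.
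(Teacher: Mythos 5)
Your proposal is correct and follows essentially the same route as the paper: the same characteristic ODE system, the same quantity $p(\theta(s))\,q(s)$ shown to be constant on $[0,t]$ via the chain rule and the transport PDE, the same endpoint evaluation giving (\ref{eq: rep_inviscid}), and the same comparison/Gronwall propagation of the decay of $p_{0}$ to kill the boundary term when integrating the PDE in $u$ (the paper simply cites the proof of Lemma \ref{lem: weak_homogeneous_properties} for this step). The only slip is your closing remark that $Y$ "decouples" from $X$ --- in (\ref{eq: ODEsystem}) the pair is fully coupled through $Q(X(s),Y(s),t-s)$; the true reason no stopping-time localisation is needed is, as you correctly note earlier, that the system is deterministic, so there is no martingale part to control.
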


\begin{proof}
The system of ODEs (\ref{eq: ODEsystem}) has a unique solution pair $(X,Y)$ and 
\[
q(s)=\exp\left(\int_{0}^{s}\frac{\partial Q^{i}}{\partial u^{i}}(X(r),Y(r),t-r)\rmd r\right),
\]
which is a bounded process by the Lipschitz assumption. Let $\theta(s)=(Y(s);X(s),t-s)$, $\eta(s)=(X(s),Y(s),t-s)$ and define
\[
h(s)=p(\theta(s))q(s),
\]
then $h(0)=p(u;x,t)$ and $h(t)=p(Y(t);X(t),0)q(t)$. Moreover for
$s\in[0,t]$, we have 
\begin{align*}
h'(s) & =q(s)\left(\frac{\partial p}{\partial u^{i}}(\theta(s))\frac{\partial Y^{i}}{\partial s}+\frac{\partial p}{\partial x^{i}}(\theta(s))\frac{\partial X^{i}}{\partial s}-\frac{\partial p}{\partial s}(\theta(s))\right)+p(\theta(s))\frac{\partial q}{\partial s}\\
 & =q(s)\left(\frac{\partial p}{\partial u^{i}}(\theta(s))Q^{i}(\eta(s))+\frac{\partial p}{\partial x^{i}}(\theta(s))Y^{i}(s)-\frac{\partial p}{\partial s}(\theta(s))+p(\theta(s))\frac{\partial Q^{i}}{\partial u^{i}}(\eta(s))\right)\\
 & =0,
\end{align*}
hence $h$ is constant on $[0,t]$ and we make use of $p(u;x,t)=h(t)$ to deduce (\ref{eq: rep_inviscid}). Regarding the positivity and mass preservation properties, the proof is almost the same as the proof of Lemma \ref{lem: weak_homogeneous_properties}, except (\ref{eq: ODEsystem}) is not stochastic: if $p_{0}\geq0$ then $p(u;x,t)\geq0$ directly and 
\begin{align*}
\lim_{|u|\rightarrow\infty}|p(u;x,t)||u|^{m} & \leq\sup_{s\in[0,t]}|q(s)|\lim_{|u|\rightarrow\infty}|p_{0}(Y(t;u);X(t;u))||u|^{m}=0.
\end{align*}
Let $f(x,t)=\int_{\R^{3}}p(u;x,t)\rmd u$. If $p_{0}$ is a probability density, then $f(x,0)=1$ for all $x$. By integrating the equation (\ref{eq: PDF_PDE_inviscid}) with respect to $u$, we obtain 
\[
\frac{\partial}{\partial t}f(x,t)-\frac{\partial}{\partial x^{i}}\int_{\R^{3}}u^{i}p(u;x,t)\rmd u=0,
\]
which leads us to the conclusion. 
\end{proof}

\section{Modelling the PDF: a concrete example}

On one hand, from the modelling point of view, only those solutions to the PDF PDE which satisfy the natural mass conservation condition (\ref{eq: div-free for weak flow}) can be used as models of distributions of turbulent velocity fields. On the other hand, from view-point of PDEs,  the mass conservation  property of solutions to the PDF PDE imposes a strong constraint on its solutions. As a matter of fact, most solutions of PDF PDE with given $A, B, Q$ do not satisfy this constraint. In general, solutions to the PDF PDE do not have an explicit expression, although our stochastic representations established in the previous sections may be helpful in dealing with the mass conservation. Consequently, it brings a next level of difficulty to verify the constraint (\ref{eq: div-free for weak flow}). It turns out that the natural condition that $\int_{\mathbb{R}^{3}}p(u;x,t)\rmd u=1$ for a solution $p(u;x,t)$ to the PDF PDE, which is equivalent to the mass conservation, is a very strict constraint for whatever the coefficients $A,B$ and $Q$ which may be modelled or measured. At least our experience demonstrates that the PDF solutions to the PDF PDE are rare, and indicates that the PDF solution $p(u;x,t)$ is not so sensitive for the choices of $A,B$ and $Q$, although we are unable to prove this claim in the present paper. Therefore the PDF PDE together with the mass conservation constraint is very rigid, and hence is good for modelling the PDF of turbulent flows. The authors hope to see further exploration in this direction in the future. 

In this section, we study an explicit example to the PDF PDE, where the mixed-derivative term vanishes in the PDE i.e. the turbulent flow is weakly homogeneous. The example seems artificial, but as we have explained above, we believe that this example has relevance to real turbulent flows.

\subsection{Space homogeneous density with perturbation}

For simplicity, the viscosity parameter in this example is set to be $\nu=1$. As the solution to the PDE (\ref{eq: PDF_PDE_homogeneous}) is solely determined by the initial data $p_{0}(u;x)$ as well as the function $C(x,u,t)$, we consider the simplest scenario $C=0$. Meanwhile, instead of setting up a common distribution such as Gaussian or exponential distribution for the initial data, we introduce the following non-negative function: for every $x\in \mathbb{R}^3$, let
\begin{align}
p_{0}(u;x)  =\alpha(u)+\beta(u)\gamma(x),
\label{eq: PDFexample}
\end{align}
where $\alpha(u)$ is a PDF given by
\begin{alignat*}{1}
\alpha(u) & =\frac{1}{(\sqrt{2\pi})^{3}\sqrt{\det\sigma}}\exp\left(-\frac{1}{2}u^{T}\sigma^{-1}u\right)
\end{alignat*}
with $\sigma_{ij}=(\frac{3}{2})^{i-2}\ione_{\left\{ i=j\right\} }$, corresponding to the PDF of a centred Gaussian vector with independent components. Here, $\beta$ satisfies $\int_{\R^{3}}\beta(u)\rmd u=0$, and is chosen to be the product of reciprocals
\begin{align*}
\beta(u) & =\begin{cases}
\prod_{i}\frac{1}{u^{i}}, & (u^{1},u^{2},u^{3})\in I,\\
0, & \text{otherwise},
\end{cases}
\end{align*}
and truncated if $u$ left the region $I$, where $I$ is defined as
\begin{align*}
I\coloneqq & \left(\left[\frac{1}{4},1\right]\cup\left[-1,-\frac{1}{4}\right]\right)\times\left(\left[\frac{1}{4},2\right]\cup\left[-2,-\frac{1}{4}\right]\right)\times\left(\left[\frac{2}{7},3\right]\cup\left[-3,-\frac{2}{7}\right]\right).
\end{align*}
 We further set the last function $\gamma$ as
\begin{align*}
\gamma(x) & =\frac{1}{36}\frac{1}{(\sqrt{2\pi})^{3}}\left\{ \left[\frac{30(x^{1}-1)}{30+3(x^{2})^{2}+2(x^{3})^{2}}\exp\left(-\frac{(x^{1})^{2}}{3}+\frac{1}{3}\sin\left(\sum_{i}x^{i}\right)\right)+\frac{\cos(x^{2}+x^{3})}{(|x|^{2}+1)}\right]+2\exp\left(-\frac{|x|^{2}}{200}\right)\right\} ,
\end{align*}
which vanishes at $|x|\rightarrow\infty$. We select these functions so that the positivity and mass preserving properties are fulfilled, and the initial velocity is a random field whose marginal density at $x$ is given by \eqref{eq: PDFexample}.


By the stochastic representation formula, the PDF of the random fields at $(x,t)$ has the form
\begin{align}
p(u;x,t) & =\alpha(u)+\beta(u)\E\left[\gamma\left(x-ut+\sqrt{2\nu}M_{t}\right)\right].\label{eq:ExplicitPDF}
\end{align}

\begin{figure}
\includegraphics[scale=0.55]{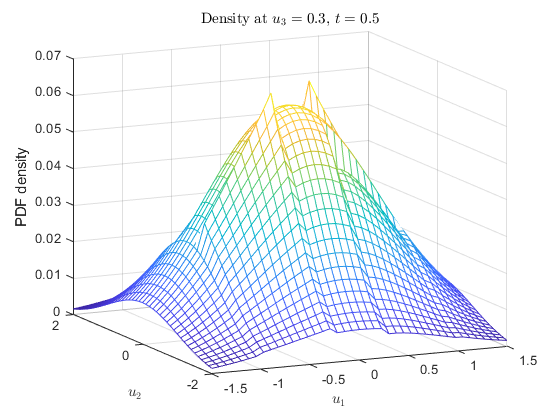}\includegraphics[scale=0.55]{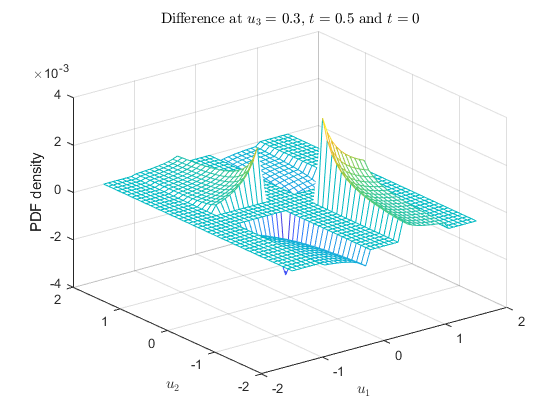}

\caption{$x=(0,0,0),\;t=\frac{1}{2}$\label{fig:T05}}
\end{figure}

We focus on the PDF in the random field $\{U(x,t)\}_{x,t}$ at $x=(0,0,0)$ and plot the graph of $u$ against $p(u;x,t)$ at $u_{3}=0.3$ for different time $t$. At $t=\frac{1}{2}$, $p(u;x,t)$ is discontinuous on the boundary of $I$ as in figure \ref{fig:T05}. If we compare the density of $p(u;x,\frac{1}{2})$ and $p(u;x,0)$ by evaluating $p(u;x,\frac{1}{2})-p(u;x,0)$, we can see the change of density in the region $I$. Meanwhile, the discontinuity becomes less apparent on the plot when we increase the time to $t=40$. From figure \ref{fig:T40}, the PDF of $U(x,t)$ is close to the density of a Gaussian random variable with density function $\alpha$, even if the discontinuity still exists. This is due to $\gamma$ vanishes at infinity and $\E[\gamma(X_{t})]\rightarrow0$ as $t\rightarrow\infty$. However, the impact of $\beta$ does not disappear from the velocity field. There is a strong discontinuity near $(\frac{1}{4},\frac{1}{4},\frac{2}{7})$ on $p(u;x,t)$ when $x=(12,12,12)$ and $t=40$. The PDF at $x=(12,12,12)$ is asymmetric and has a different evolution than the density at the origin, which demonstrates that the impact $\beta$ shifts from the origin to somewhere far away as time changes.

\subsection{Motivation for the construction}

The mass-preserving property of the PDE (\ref{eq: PDF_PDE_homogeneous}) corresponds to the divergence-free condition (\ref{eq: div-free for weak flow}), which is difficult to verify explicitly even when we have the stochastic representation. Apart from describing the motivation for choosing $\alpha,\beta$ and $\gamma$, we will demonstrate the our solution to the PDF PDE (\ref{eq:ExplicitPDF}) satisfies the divergence-free constraint.

Assuming $C=0$ does not imply that the turbulence flow associated to solution $p(u;x,t)$ is weakly isotropic. For example, we force the conditional average increment to satisfy \begin{align*}
\rho^{i}(x,y,u,t) & =O(|y-x|^{3})
\end{align*}
when $|y-x|\to0$ and to vanish sufficiently fast when $|y|\rightarrow\infty$, which naturally leads to $A_{jk}^{i}=0$. In addition, we let 
\begin{align*}
\rho^{i}(x,y+x,u,t) & =\rho^{i}(x,-y+x,u,t)
\end{align*}
for all $i$. If the conditional variance is of the form $\sigma^{jk}(x,y,u,t)=c+\prod_{i}f_{i,j,k}(y_{i}-x_{i},t)\lambda(x,u,t)$, we conclude that 
\begin{align*}
Q^{i}(x,u,t) & =\sum_{j,k}\int\frac{1}{4\pi|y-x|^{3}}\frac{\partial}{\partial y^{i}}\frac{\partial^{2}\left(\sigma^{kl}+b^{k}b^{l}\right)}{\partial y^{k}\partial y^{l}}\rmd y\\
 & =\sum_{j,k}\int\frac{1}{4\pi|y|^{3}}\frac{\partial^{3}}{\partial y^{i}\partial y^{j}\partial y^{k}}\left(\prod_{i}f_{i,j,k}(y_{i})\lambda(x,u,t)+\left(\rho^{j}(x,y+x,u,t)\rho^{k}(x,y+x,u,t)\right)\right)\rmd y\\
 & =0,
\end{align*}
provided that $f_{i,j,k}$ is an even function with $f_{i,j.k}(z)\rightarrow0$ sufficiently fast as $z\rightarrow0$ and $z\rightarrow\infty$ for all $t$. In particular $f_{i,j,k}(z,t)=\frac{1}{8}z^{4}\exp\left(-\frac{z^{2}(1+t)}{2}\right)$ fits the criterion that we required. A pair of the conditional statistics $\rho$, $\sigma$ obeying the above constraints is a reasonable choice leading to $C=0$.
\begin{figure}
\includegraphics[scale=0.55]{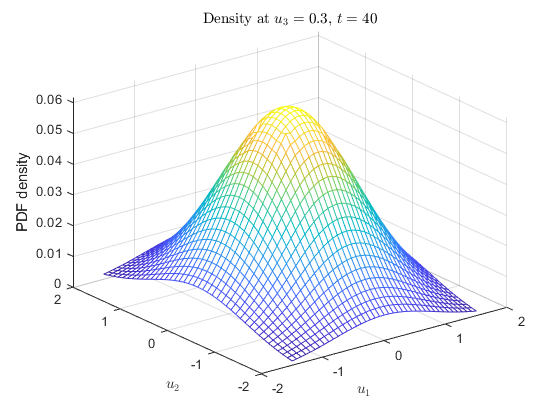}\includegraphics[scale=0.55]{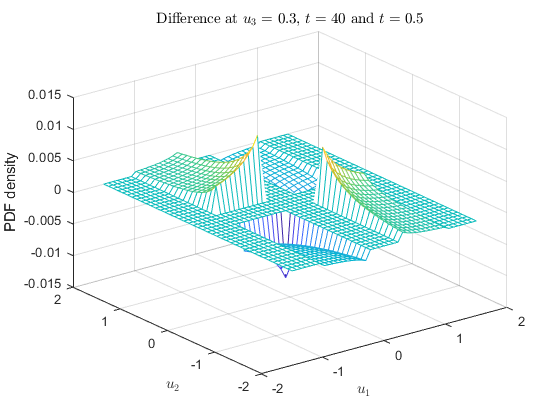}\caption{$x=(0,0,0),\;t=40$\label{fig:T40}}
\end{figure}

Regarding the initial data $p_{0}(u;x)$ of the form (\ref{eq: PDFexample}), there is no strong restriction on $\alpha$, hence $\alpha$ is allowed to be replaced by another PDF, which not necessarily corresponds to a Gaussian vector. The crucial ingredients in $p_{0}(u;x)$ are $\beta$ and $\gamma$, which ensure the solution $p(u;x,t)$ satisfy the mass conservation property. As a consequence of the fact that $C=0$, the right-hand side of equation (\ref{eq: PDF_PDE_isotropic}) vanishes, ending up an equation which depends solely on the derivatives of $p$ with respect to $(t,x)$. Moreover, the relevant SDEs in the stochastic representation (\ref{eq: S_rep_h}) have the explicit form $Y_{t}=u$ and $X_{t}=x-ut+\sqrt{2\nu}M_{t}$ , while the divergence-free constraint reads that
\begin{align*}
\nabla_{x}\cdot\int_{\R^{3}}u & p(u;x,t)\rmd u=\E\left[\int_{\R^{3}}u\cdot\nabla_{x}p(u,x+\sqrt{2\nu}M_{t}-ut,0)\rmd u\right]=0.
\end{align*}
for all $x\in\R^{3},t>0$. In particular, if
\begin{align}
\int_{\R^{3}}u\cdot\nabla_{x}p(u,x-ut,0)\rmd u & =0\label{eq: suffice_condition_div_free_C0}
\end{align}
is ensured for all $(x,t)$, $\int_{\R^{3}}p(u;x,t)\rmd u=1$ is guaranteed.

When $t>0,$ the left-hand side of the equation (\ref{eq: suffice_condition_div_free_C0}) has the following form
\begin{align*}
\int_{\R^{3}}u\cdot\nabla_{x}p(u;x-ut,0)\rmd u & =\int_{\R^{3}}u\cdot\nabla_{x}\gamma(x-ut)\beta(u)\rmd u\\
 & =-\frac{1}{t}\int_{\R^{3}}u\cdot\nabla_{u}(\gamma(x-ut))\beta(u)\rmd u\\
 & =-\frac{1}{t}\int_{\R^{3}}\nabla_{u}\cdot\left(u\gamma(x-ut)\beta(u)\right)-\nabla_{u}\cdot\left(u\beta(u)\right)\gamma(x-ut)\rmd u.
\end{align*}
If $\gamma$ and $\beta$ decay sufficiently fast as $|u|\rightarrow\infty$ for all $x$, the first term in the integrand has zero contribution after integrated with respect to $u$. Apart from our consideration on the mathematical side, our choice of $\gamma$ guarantees the impact of $\beta$ vanish as $|x|\rightarrow\infty$, but the speed of decay is slow enough such that the impact of $\beta$ is still observable when $t$ is rather large. Last but not least, it must satisfy the following constraint
\begin{align*}
||\gamma(x)||_{\infty} & \leq\sup_{u\in\R^{3}}\bigg|\frac{\beta(u)}{\alpha(u)}\bigg|.
\end{align*}
We remark that under current form of $\beta$, $\gamma$ decays as $|x|\rightarrow0$ is not a necessity. 
\begin{figure}
\includegraphics[scale=0.55]{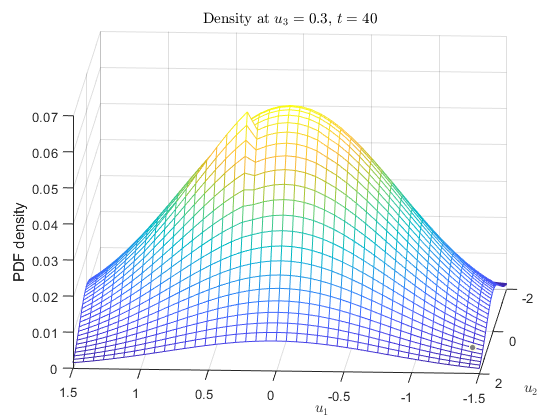}\includegraphics[scale=0.55]{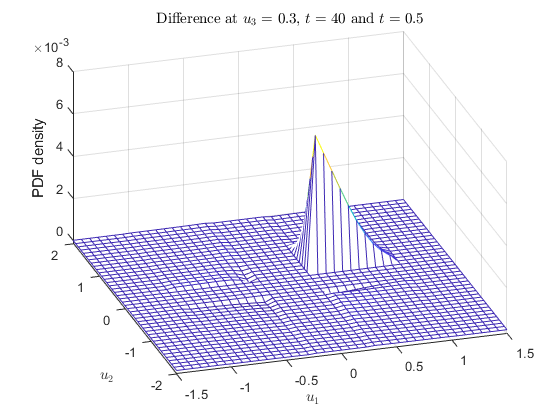}

\caption{$x=(12,12,12),t\;=40$\label{fig: T40X12}}
\end{figure}

The remaining problem turns into finding our right $\beta$ such that $\int_{\R^{3}}\nabla_{u}\cdot\left(u\beta(u)\right)\gamma(x-ut)\rmd u=0$ for all $t>0$ and $\intr u\beta(u)\rmd u\cdot\nabla_{x}\gamma(x)=0$, which ensure the initial data must also satisfy the divergence-free condition (\ref{eq: div-free for weak flow}). Our choice on $\beta$ is motivated by the fact that equation (\ref{eq: suffice_condition_div_free_C0}) is satisfied provided we impose the constraint $\nabla_{u}\cdot(u\beta(u))=0$ on $\beta$. $\beta$ is truncated to this form, in order to guarantee the integrability as well as the positivity of $p_{0}(u;x)$. Moreover, we can ensure $\int_{\R^{3}}\beta(u)\rmd u=0$, $\int_{\R^{3}}u^{i}\beta(u)\rmd u=0$ and $\partial_{u^{i}}(u^{i}\beta(u))=\partial_{u^{i}}(u^{j}u^{k})^{-1}=0$ for all $i=1,2,3$ and $i,j,k$ being distinct. Nevertheless, the purpose of symmetry of the interval $I$ and $\beta$ is to simplify our example, therefore $\beta$ can be asymmetrical.

The PDF (\ref{eq:ExplicitPDF}) at $(x,t)$ is discontinuous
in the variable $u$, but it is still a strong solution to the PDE, because the $\partial_{u^{i}}p$ disappears in the PDE (\ref{eq: PDF_PDE_isotropic}) in this circumstance.

\section{Concluding remarks}
This paper derives a new PDE which describes the evolution of one-time one-point PDF of the velocity random field of a turbulent flow. The PDF PDE, which is highly non-linear and is determined by two conditional statistics of a turbulent flow, should be a useful tool in modelling distributions of turbulence velocity fields.

The modelling of viscous turbulence in various environments by solving  numerically the PDF PDE (\ref{eq: PDF_PDE_origin_C}) with measured data or based on the priori determination of $A,B$ and $Q$ should be beneficial in understanding turbulent flows. To implement good models of PDFs for turbulent flows, we need to numerically calculate solutions of the PDF PDE, with fed data which determine the functions $A$, $B$ and $Q$. The solution has to satisfy the natural constraint, that the mass must be preserved through out the evolution of the PDF. The conservation of the total mass of the solution is an important topic itself and is worth of further study. Finally, we would like to point out that the coefficients $A$, $B$ and $Q$ defined in equation (\ref{eq: BA_in_PDE}), which determine the statistics of the turbulence at one-time one-space, must have significant physical meaning in turbulence. These coefficients, which are considered as turbulent flow parameters, should play their roles in further research. 


\end{document}